\theoremstyle{plain} 
\newtheorem{thm}[equation] {Theorem}
\newtheorem{conj}[equation]{Conjecture}
\newtheorem{cor}[equation]{Corollary}
\newtheorem{lem}[equation]{Lemma}
\newtheorem{prop}[equation]{Proposition}
\theoremstyle{definition}
\newtheorem{defn}[equation]{Definition}
\newtheorem{ex}[equation]{Example}
\theoremstyle{remark}
\newtheorem{rem}[equation]{Remark}
\numberwithin{equation}{section}
\numberwithin{figure}{section}
\DeclareMathOperator{\Hom}{Hom}
\DeclareMathOperator{\sym}{Sym}
\newcommand{\ra}{\rightarrow}
\newcommand{\field}[1]{\mathbb{#1}}
\newcommand{\F}{\ensuremath{\field{F}}}
\newcommand{\C}{\ensuremath{\field{C}}}
\newcommand{\N}{\ensuremath{\field{N}}}
\newcommand{\Q}{\ensuremath{\field{Q}}}
\newcommand{\Z}{\ensuremath{\field{Z}}}
\newcommand{\size}[1]{\lvert #1 \rvert}
\newcommand{\abs}[1]{\lvert #1 \rvert}
\newcommand{\al}{\alpha}
\newcommand{\be}{\beta}
\newcommand{\la}{\lambda}
\newcommand{\OO}{\ensuremath{\mathcal O}}
\newcommand{\U}{\ensuremath{\mathcal{U}}}
\DeclareMathOperator{\tr}{tr}
\DeclareMathOperator{\hwe}{hwe}
\DeclareMathOperator{\ewe}{ewe}
\newcommand{\wh}{{\textsc{h}}}
\newcommand{\wlee}{{\textsc{l}}}
\newcommand{\we}{{\textsc{e}}}
\DeclareMathOperator{\wwe}{wwe}
\newcommand{\asing}{A^{\rm sing}}
\newcommand{\adble}{A^{\rm dble}}
\DeclareMathOperator{\se}{se}
\newcommand{\wl}{w^{(\ell)}}
\DeclareMathOperator{\efflength}{efflen}
\title[Weights on finite fields
]{Weights on finite fields and \\ failures of the MacWilliams identities}
\author[J. A. Wood]{Jay A. Wood}
\address{
Western Michigan University
}
\email{jay.wood@wmich.edu}
\dedicatory{In memoriam: \\
William Browder, \oldstylenums{1934}--\oldstylenums{2025} \\
Alan T. Huckleberry, \oldstylenums{1941}--\oldstylenums{2025} \\
Catherine M. Murphy, \oldstylenums{1940}--\oldstylenums{2025}  }
\subjclass[2020]{Primary:  94B05}
\keywords{MacWilliams identities, dual code, counter-examples}
\begin{document}

\begin{abstract}
In the 1960s, MacWilliams proved that the Hamming weight enumerator of a linear code over a finite field completely determines, and is determined by, the Hamming weight enumerator of its dual code.  In particular, if two linear codes have the same Hamming weight enumerator, then their dual codes have the same Hamming weight enumerator.

In contrast, there is a wide class of weights on finite fields whose weight enumerators have the opposite behavior: there exist two linear codes having the same weight enumerator, but their dual codes have different weight enumerators.
\end{abstract}

\maketitle

\section{Introduction}

In the early 1960s, Florence Jessie MacWilliams (1917--1990) proved that the Hamming weight enumerator of a linear code $C \subseteq \F_q^n$ determines the Hamming weight enumerator of the dual code $C^\perp$, and vice versa.  The relationship between the weight enumerators is encapsulated in the \emph{MacWilliams identities} \cite{MR2939359, MR0149978}:
\begin{equation}   \label{eq:MWforFq}
\hwe_{C^\perp}(X,Y) = \frac{1}{\size{C}} \hwe_C(X+ (q-1)Y, X-Y) .
\end{equation}
An immediate consequence of the MacWilliams identities is:  If two linear codes $C, D \subseteq \F_q^n$ satisfy $\hwe_C = \hwe_D$, then $\hwe_{C^\perp} = \hwe_{D^\perp}$.  We say that the Hamming weight \emph{respects duality}.

The MacWilliams identities have been generalized in several ways: by generalizing the alphabets for the codes and by generalizing the enumerators.  For the Hamming weight enumerator, the MacWilliams identities hold for additive codes over a finite abelian group \cite{MR384310} and for linear codes over a finite Frobenius ring \cite{wood:duality}.

The Hamming weight enumerator can be interpreted in two ways.  One way is to \emph{count} the number of nonzero entries in a codeword; the other is to \emph{add} the values of the weight of the entries in a codeword.  These interpretations lead to two generalizations of the Hamming weight enumerator: by using more general partitions or by using more general weights.

Let $A$ be a finite abelian group, and suppose $\mathcal{P}$ is a partition of $A$.  For any additive code $C \subseteq A^n$, one can define a partition enumerator that encodes the numbers of entries in a codeword that belong to the various blocks of the partition.  The Hamming weight enumerator is the partition enumerator for the partition of $A$ into $\{0\}$ and $A-\{0\}$, while the complete enumerator uses the complete partition $A = \cup_{a \in A} \{a\}$ into singletons.  Certain partitions, called reflexive partitions, give rise to well-behaved MacWilliams identities.  For this, and more, see \cite{MR3336966}.

The other generalization, using more general weights, is the topic of this paper.  Suppose $A$ is a finite abelian group equipped with an integer-valued weight $w$.  We assume $w \colon  A \ra \Z$, with $w(0)=0$, and $w(a) > 0$ for $a \neq 0$.  Denote the maximum value of $w$ by $w_{\max}$.  Extend $w$ additively to $A^n$, so that $w(a_1, a_2, \ldots, a_n) = \sum_{i=1}^n w(a_i)$.  For any additive code $C \subseteq A^n$, define its \emph{$w$-weight enumerator} by 
\[  \wwe_C(X,Y) = \sum_{c \in C} X^{n w_{\max} - w(c)} Y^{w(c)} . \]
It is natural to ask: \emph{Do MacWilliams identities hold for the $w$-weight enumerator?}

The MacWilliams identities hold for the Hamming weight, as described above.  They also hold for the homogeneous weight over $\F_q$ (because the homogeneous weight is a multiple of the Hamming weight in that setting), for the Lee weight over $\Z/4\Z$ \cite{Hammons-etal}, and for the homogeneous weight over $M_{2 \times 2}(\F_2)$ \cite{MR4963840}.  The MacWilliams identities fail to hold, by virtue of the weight failing to respect duality: for the Rosenbloom-Tsfasman weight on matrices \cite{MR1900585}; for the Lee weight over $\Z/m\Z$, $m \geq 5$ \cite{MR4119402}; for the homogeneous weight over $\Z/m\Z$, $m \geq 6$, $m$ not prime \cite{revuma.2807}; and for many weights having maximal symmetry over finite chain rings or over matrix rings $M_{k \times k}(\F_q)$ \cite{MR4963840}.  The latter results suggest that it might be a rare event for a weight to respect duality.

The present paper reinforces the rarity of a weight respecting duality.  The main result, Theorem~\ref{thm:MainV2}, can be interpreted as saying that, for weights on finite fields subject to certain restrictions, a generic weight does not respect duality.

Here is an outline of the argument.  Consider an integer-valued weight $w$ on $\F_q$ with the restrictions that $w(a) = w(-a)$ for $a \in \F_q$ and that $w$ is not a multiple of the Hamming weight.  Assume also that $w$ is nondegenerate (defined precisely in Definition~\ref{def:nondegenerate}).  The goal is to prove that $w$ does not respect duality by constructing linear codes $C, D \subseteq \F_q^n$, for some $n$, such that $\wwe_C = \wwe_D$, yet $\wwe_{C^\perp} \neq \wwe_{D^\perp}$.

We consider linear codes of dimension $2$ over $\F_q$, and we consider, without loss of generality, generator matrices of size $2 \times n$ whose columns have a specific normal form  \eqref{eqn:OOReps} that reflects the symmetry of the weight $w$.  To present a linear code $C$, it is then enough to specify a multiplicity function $\eta_C$ that keeps track of how many times a specific column appears in the generator matrix for $C$.  An ordered listing $\omega_C$ of the weights of the codewords in $C$ is then a linear function of the multiplicity function: $\omega_C = W \eta_C$, where the matrix $W$ depends only on the weight $w$.

Two linear codes $C, D \subseteq \F_q^n$ will satisfy $\wwe_C = \wwe_D$ if and only if their ordered lists of weights of codewords, $\omega_C$ and $\omega_D$, are permutations of each other.  At this point, we work backwards---start with ordered lists $\omega_C$ and $\omega_D$ that are permutations of each other, and invert the linear equations $\omega = W \eta$ to find $\eta_C$ and $\eta_D$.  The matrix $W$ has a lot of structure, which allows for careful analysis.  By keeping the listings $\omega_C$ and $\omega_D$ very simple, with just two different weights, it is possible to prove that the resulting multiplicity functions, $\eta_C$ and $\eta_D$, have nonnegative entries, and so can be used to define linear codes.  It is also possible to analyze the numbers of dual codewords of small weight and to show that they are different, at least generically.

\subsubsection*{Acknowledgments}
The year 2025 marked the passing of three mathematicians who influenced me greatly.  Professor Huckleberry taught complex analysis when I was an undergraduate, and his courses encouraged me to study complex differential geometry in graduate school.  Professor Browder suggested I study $\Z/4\Z$-valued binary quadratic forms, which led me into the beautiful $\Z_4$-world in coding theory.  Cathy Murphy was my colleague and department head for ten years.  Her generous and thoughtful mentorship opened up many opportunities for me.

\section{Preliminaries}  \label{sec:Prelims}

In this section we review some definitions from algebraic coding theory and establish notation that will be used throughout the paper.  Because most of the theorems we will prove are for linear codes defined over a finite field $\F_q$ or over an integer residue ring $\Z/m\Z$, we will confine our definitions to linear codes defined over a finite commutative ring with $1$.  For additional details or the modifications needed for noncommutative rings, see \cite{wood:turkey}.

Let $R$ be a finite commutative ring with $1$.  Denote by $\U(R)$ the group of units, i.e., invertible elements, of $R$.  The  group of units of a finite field $\F_q$ equals the multiplicative group $\F_q^\times$ of nonzero elements; $\F_q^\times$ is known to be a cyclic group, generated by some element $\al$, called a \emph{primitive element} of $\F_q$. 
Define the \emph{dot product} on $R^n$ by
\[  x \cdot y = \sum_{i=1}^n x_i y_i \in R, \]
for $x = (x_1, x_2, \ldots, x_n), y=(y_1, y_2, \ldots, y_n) \in R^n$.  

A \emph{linear code} of length $n$ over $R$ is an $R$-submodule $C \subseteq R^n$.  Given a linear code $C \subseteq R^n$, define its \emph{dual code} $C^\perp \subseteq R^n$ by
\[  C^\perp = \{ y \in R^n \colon  x \cdot y = 0 \text{ for all $x \in C$}\} . \]
In general, for a linear code $C \subseteq R^n$, we have $C \subseteq (C^\perp)^\perp$.  When $R$ is a finite field $\F_q$ and $C \subseteq \F_q^n$ is a linear code, then linear algebra tells us that $\dim_{\F_q} C^\perp = n - \dim_{\F_q} C$ and $C = (C^\perp)^\perp$.  These properties generalize to finite Frobenius rings \cite{wood:turkey}: $\size{C} \cdot \size{C^\perp} = \size{R^n}$, and $C = (C^\perp)^\perp$.  Finite fields are Frobenius.

A \emph{weight} on $R$ is a complex-valued function $w \colon  R \ra \C$ with $w(0)=0$.  In this paper, weights will be assumed to have integer values, with $w(0)=0$ and $w(r) >0$ for $r \neq 0$.  Denote the maximum value of $w$ by $w_{\max}$ and the smallest positive value of $w$ by $\mathring{w}$. A weight $w$ determines a \emph{symmetry group} $\sym(w)$ in $\U(R)$:
\begin{equation}  \label{eqn:DefnSymGroup}
\sym(w) = \{ u \in \U(R) \colon  w(ur) = w(r) \text{ for all $r \in R$}\} .
\end{equation}

The symmetry group $\sym(w)$ acts on $R$ by ring multiplication, and $\sym(w)$ acts on $R^\sharp = \Hom_R(R,R)$, the group of $R$-module homomorphisms $\la \colon R \ra R$, by multiplication on the values of $\la \in R^\sharp$.  Elements $\la \in R^\sharp$ are called \emph{functionals}; their inputs are written on the left.  Denote the $\sym(w)$-orbits of $x \in R$ and $\la \in R^\sharp$ by $[x]$ and $[\la]$, respectively.  Define a matrix $\mathcal{W}$ with rows indexed by the nonzero $\sym(w)$-orbits of $R$ and columns indexed by the nonzero $\sym(w)$-orbits of $R^\sharp$:
\begin{equation}  \label{eq:R-W}
\mathcal{W}_{[x], [\la]} = w(x \la) .
\end{equation}
These values are well-defined, given the definition of $\sym(w)$.

\begin{defn}  \label{def:nondegenerate}
A weight $w$ on $R$ is \emph{nondegenerate} if the matrix $\mathcal{W}$ of \eqref{eq:R-W} is invertible over $\Q$.
\end{defn}

\begin{ex}
The \emph{Hamming weight} is defined over any finite ring $R$:
\[  \wh(r) = \begin{cases}
0, & r = 0, \\
1, & r \neq 0.
\end{cases} \]
The symmetry group of the Hamming weight is $\sym(\wh) = \U(R)$.  The Hamming weight is nondegenerate if and only if $R$ is Frobenius: \cite[Theorems~6.3, 6.4]{wood:duality} and \cite[Theorem~7.2]{wood:turkey}.
\end{ex}

\begin{ex}  \label{ex:PowerWeights}
For the integer residue rings $R = \Z/m\Z$, represent residue classes uniquely with integers $r$ satisfying $-m/2 < r \leq m/2$.  Define the Lee weight $\wlee$ and Euclidean weight $\we$ on $\Z/m\Z$ by
\[  \wlee(r) = \abs{r} , \quad \we(r) = \abs{r}^2 , \]
where $\abs{r}$ is the ordinary (archimedean) absolute value on $\Z$.  The Hamming, Lee, and Euclidean weights are cases $\ell=0, 1, 2$ of the \emph{power weights} $\wl$ defined for integers $\ell \geq 0$ by
\[  \wl(r) = \abs{r}^\ell , \quad r \in \Z/m\Z . \]
Except for the Hamming weight, all the power weights $\wl$, $\ell = 1, 2, \ldots$, have $\sym(\wl) = \{ \pm 1\} \subseteq \U(\Z/m\Z)$.  The Lee and Euclidean weights are known to be nondegenerate \cite{MR3947347}.
\end{ex}

While not studied in this paper, the homogeneous weight is another important weight defined on any finite ring with $1$;  it is nondegenerate if and only if the ring is Frobenius \cite{greferath-schmidt:combinatorics}.

A weight $w$ on $R$ extends additively to a weight on $R^n$ by
\[  w(r_1, r_2, \ldots, r_n) = \sum_{i=1}^n w(r_i) . \]
There are other ways to define weights on $R^n$, but they are not the subject of this paper.

The \emph{$w$-weight enumerator} of a linear code $C \subseteq R^n$ is a homogeneous polynomial of degree $n w_{\max}$ in $\C[X,Y]$ defined by
\[  \wwe_C(X,Y) = \sum_{c \in C} X^{n w_{\max} - w(c)} Y^{w(c)} . \]
To save space in displays, we will often set $X=1$ and $Y=y$.  This loses direct information about the length $n$ of the code.

For the Hamming weight on a finite field $\F_q$, MacWilliams \cite{MR2939359, MR0149978} proved that the Hamming weight enumerator of a linear code $C \subseteq \F_q^n$ is related to the Hamming weight enumerator of its dual; see \eqref{eq:MWforFq}.
This result was later generalized to the Hamming weight on finite Frobenius rings \cite{wood:duality}; such rings include the integer residue rings $\Z/m\Z$.

\begin{defn}
A weight $w$ on $R$ \emph{respects duality} when it has the following property: for linear codes $C, D \subseteq R^n$, if $\wwe_C = \wwe_D$, then $\wwe_{C^\perp} = \wwe_{D^\perp}$.
\end{defn}
The MacWilliams identities \eqref{eq:MWforFq} imply that the Hamming weight on $\F_q$ respects duality.

\begin{rem}
Another way to view respecting duality uses the weight distribution of a code.  For a code $C \subseteq R^n$ and an integer $j$, define 
\[  A_j(C) = \size{\{ c \in C \colon  w(c)=j \} } . \]
Knowledge of the $A_j$ determines $\wwe_C$:
\[  \wwe_C(X,Y) = \sum_j A_j(C) X^{n w_{\max}-j} Y^j . \]
A weight $w$ respects duality if, for linear codes $C, D$ of the same length, $A_j(C) = A_j(D)$ for all $j$ implies that $A_j(C^\perp) = A_j(D^\perp)$ for all $j$.
\end{rem}

The main result of this paper is that most weights on $\F_q$ do not respect duality.  
In order to state the result precisely, we will need a few more definitions.

Let $w$ be a weight on a finite field $\F_q$.  Assume that $w$ has integer values, with $w(0)=0$ and $w(r) >0$ for nonzero $r \in \F_q$.  Write $\mathring{w} = \min\{ w(r)  \colon  r \in \F_q, r \neq 0\}$.  Assume that the symmetry group $\sym(w)$ satisfies $-1 \in \sym(w)$ and $\sym(w) \neq \F_q^\times$.  (The last condition means that $w$ is not a multiple of the Hamming weight.)  Fix a primitive element $\al$ of $\F_q$; i.e., $\al$ generates the cyclic group $\F_q^\times$.  Let $t$ be the smallest positive integer such that $\al^t$ generates $\sym(w)$.  Then $t \size{\sym(w)} = \size{\F_q^\times} = q-1$.  Because $\sym(w) \neq \F_q^\times$, we have $t \geq 2$.
As $\al^t \in \sym(w)$, we see that $w(\al^{t+j}) = w(\al^t \al^j) = w(\al^j)$ for all $j$. 

Define $\breve{w}$ and $c_m(w,w)$, for $m=0, 1, \ldots, t-1$, by 
\[  \breve{w} = \sum_{j=0}^{t-1} w(\al^j) , \quad
c_m(w,w) = \sum_{j=0}^{t-1} w(\al^j) w(\al^{m+j}) .  \]

Here is the precise statement of the main result.
\begin{thm}  \label{thm:MainV2}
Let $w$ be an integer-valued weight on $\F_q$, with $w(0)=0$ and $w(r)>0$ for $r \neq 0$.  Assume $w$ satisfies:
\begin{itemize}
\item  $w$ is nondegenerate,
\item $-1 \in \sym(w)$,
\item $\sym(w) \neq \F_q^\times$,
\item  the minimum value $\mathring{w}$ is achieved on exactly one $\sym(w)$-orbit,
\item  at least two of the following values are different:
\[ qc_1(w,w), qc_2(w,w), \ldots, qc_{\lfloor t/2 \rfloor}(w,w), ((q-1)/t) \breve{w}^2 . \]
\end{itemize}
Then, for some $n$, there exist two linear codes $C,D \subseteq \F_q^n$, with $\wwe_C = \wwe_D$, but $\wwe_{C^\perp} \neq \wwe_{D^\perp}$.  In particular, $w$ does not respect duality.
\end{thm}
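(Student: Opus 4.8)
The plan is to reduce every relevant quantity to a single combinatorial datum—the multiset of columns of a $2\times n$ generator matrix—and then engineer a collision. First I would fix the normal form for columns so that scaling a column by an element of $\sym(w)$, which by definition leaves every codeword weight unchanged, is quotiented out; a nonzero column then lies in one of the $t(q+1)$ orbits indexed by a point of the projective line $\mathbb{P}^1(\F_q)$ together with a residue $j\in\{0,\dots,t-1\}$ recording which coset of $\sym(w)$ in $\F_q^\times$ the column meets. A dimension-$2$ code $C$ is encoded by its multiplicity function $\eta_C$, counting how many columns fall in each orbit, and the ordered list $\omega_C$ of codeword weights—indexed by the $t(q+1)$ orbits of nonzero messages—is the linear image $\omega_C=W\eta_C$. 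Since scaling a message by $\sym(w)$ preserves weight and $w$ is even ($-1\in\sym(w)$), two codes have $\wwe_C=\wwe_D$ exactly when $\omega_C$ and $\omega_D$ agree as multisets, i.e. are permutations of one another.

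The second step is to understand $W$ well enough to invert it. Writing $m\cdot g=ac+bd$, the block of $W$ indexed by a message point and a column point vanishes identically when the two points are orthogonal (one such column point per message point, a bijection of $\mathbb{P}^1(\F_q)$), and is otherwise a $t\times t$ matrix whose $(j,k)$ entry is $w(\al^{j+k+s})$ for a shift $s$ depending only on the pair of points. Each such block is therefore circulant and, after reindexing, coincides with the matrix $\mathcal{W}$ of \eqref{eq:R-W} for $R=\F_q$; nondegeneracy of $w$ makes $\mathcal{W}$ invertible over $\Q$, and the one-zero-block-per-row pattern then yields invertibility of $W$. This circulant structure is exactly why the autocorrelations $c_m(w,w)=\sum_j w(\al^j)w(\al^{m+j})$ and the total $\breve{w}=\sum_j w(\al^j)$ govern the second-order behavior of weights.

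I would then work backwards. Choosing $\omega_C$ and $\omega_D$ to be very simple permutations of each other—using only two distinct weight values, arranged so that the two lists are permutation-equivalent but distribute those values differently across the $\mathbb{P}^1$-blocks—I set $\eta_C=W^{-1}\omega_C$ and $\eta_D=W^{-1}\omega_D$. Here the explicit inverse of the circulant blocks is used to show that, possibly after adding a uniform background of columns (which shifts every entry of $\omega$ by the same constant, since for each nonzero message the values $m\cdot g$ cover $\F_q$ evenly, and hence preserves the permutation relation) and clearing denominators, both $\eta_C$ and $\eta_D$ have nonnegative integer entries; these then define honest linear codes $C,D\subseteq\F_q^n$ for a suitable $n$, automatically with $\wwe_C=\wwe_D$. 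Finally, using that $\mathring{w}$ is attained on a single orbit to keep the weight bookkeeping clean, I would examine the dual codewords of Hamming support one and two—the former being zero columns, the latter pairs of parallel columns—and record their $w$-weights; the contribution of these to $\wwe_{C^\perp}$ is a quadratic form in $\eta$ whose coefficients are precisely the $q\,c_m(w,w)$ and $((q-1)/t)\breve{w}^2$. Evaluating this form on $\eta_C$ and on $\eta_D$ and subtracting, the discrepancy is a combination of the differences among these values, so the hypothesis that at least two of them differ guarantees $\wwe_{C^\perp}\neq\wwe_{D^\perp}$.

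I expect the main obstacle to be the simultaneous reconciliation of two competing demands in the last two steps: the inverted multiplicity functions must land in the nonnegative integers, so that codes actually exist, yet $\omega_C$ and $\omega_D$ must be chosen different enough that the quadratic dual-count form separates them. Controlling the signs of $W^{-1}\omega$ requires the fine structure of the circulant inverse, and verifying that the dual discrepancy does not accidentally cancel is where the fifth hypothesis—and the word ``generically''—does its work.
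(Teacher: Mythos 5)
Your architecture matches the paper's almost exactly: normal-form columns modulo $\sym(w)$, the multiplicity function $\eta$ with $\omega = W\eta$, working backwards from two-valued lists $\omega,\omega'$ that are permutations of each other, nonnegativity of $W^{-1}\omega$, and distinguishing the duals by counting low-weight doubletons. (Your ``uniform background of columns'' device is a legitimate alternative to the paper's interval $\delta_{\min}^{(2)}\leq\varrho\leq\delta_{\max}^{(2)}$ for forcing nonnegativity, and since $\sum_\la\eta(\la)=\sum_\la\eta'(\la)$ the background does not disturb the dual discrepancy.)

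There is, however, a genuine gap in your last step. The count of dual doubletons of weight $2\mathring{w}$ is not a quadratic form in $\eta$ ``whose coefficients are precisely the $q c_m(w,w)$ and $((q-1)/t)\breve{w}^2$'': under the hypotheses on $\mathring{w}$ and $-1\in\sym(w)$, the two nonzero entries of such a doubleton force the two corresponding columns to lie in the \emph{same} $\sym(w)$-orbit, so the count is the constant-coefficient diagonal form $\size{\sym(w)}\sum_\iota\binom{\eta_\iota}{2}$. The correlations enter only after substituting $\eta=W^{-1}\omega$: the discrepancy becomes a difference of off-diagonal entries of $W^{-2}$, and those entries are $t b_0^2+q c_m(b,b)$ and $((q-1)/t)\breve{b}^2$, where $b$ is the $H$-invariant function defining $W^{-1}$ --- not the correlations of $w$ itself. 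The fifth hypothesis speaks about $W^2$, so to close the argument you need a bridge from ``the relevant entries of $W^2$ are not all equal'' to ``the relevant entries of $W^{-2}$ are not all equal.'' The paper supplies this by observing that the class of matrices $(x-y)I_n+yJ_n$ is closed under inversion, so if $W^{-2}$ had all those entries equal then so would $W^2=(W^{-2})^{-1}$, a contradiction. Without this (or an equivalent direct computation of $W^{-2}$), your final sentence --- that the hypothesis on $w$'s correlations guarantees the dual discrepancy is nonzero --- does not follow; note also that the choice of the pair $S,S'$ must then be dictated by which entries of $W^{-2}$ differ, not by which $c_m(w,w)$ differ. The remaining issue you flag (the quadratic $f(\varrho)$ may vanish at particular $\varrho$; indeed $\varrho=1$ is a double zero) is handled correctly by your appeal to genericity, since a nontrivial quadratic has at most two roots.
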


\section{Presenting linear codes via multiplicity functions}  \label{sec:MultFcns}

We continue to assume that $R$ is a finite commutative ring with $1$ and that $w$ is a weight on $R$, with $w(0)=0$ and $w(r)>0$ for all nonzero $r \in R$.  Extend $w$ additively to $R^n$, so that $w(r_1, r_2, \ldots, r_n) = w(r_1) + w(r_2) + \cdots + w(r_n)$.  The symmetry group $\sym(w)$ of $w$ was defined in \eqref{eqn:DefnSymGroup}.

An $R$-linear code $C \subseteq R^n$ can be presented as the row space of a $k \times n$ matrix $G$, called a \emph{generator matrix}.  Every codeword in $C$ then has the form $c = v G$, where $v \in R^k$.
Writing $v=(v_1, v_2, \ldots, v_k) \in R^k$ and $G = (g_{i,j})$, with $i=1, 2, \ldots, k$ and $j = 1, 2, \ldots, n$, then, for $c = vG$,
\begin{equation}  \label{eqn:wOfc}
w(c) = w(vG) = \sum_{j=1}^n w(\sum_{i=1}^k v_i g_{i,j}) .
\end{equation}
This formula for $w(c)$ remains the same if
\begin{itemize}
\item the columns of $G$ are permuted, or
\item for any $j=1, 2, \ldots, n$, the $j$th column of $G$ is multiplied by $h_j \in \sym(w)$.
\end{itemize}
Thus, there is no loss of generality to restrict the columns of $G$ to come from a set of representatives of the orbits of $\sym(w)$ acting on the space of all columns.  Furthermore, the values of $w(c)$ depend only the multiplicities of how many times a particular representative appears as a column of $G$.  Indeed, suppose the set of representatives is $\OO = \{\la_1, \la_2, \ldots, \la_N \}$, and that $\la_\iota$ appears $\eta_\iota$ times as a column of $G$.  Then \eqref{eqn:wOfc} becomes
\begin{equation}  \label{eqn:W-times-eta}
w(c) = \sum_{\iota=1}^N w(v \la_\iota) \eta_\iota .
\end{equation}

The symmetry group appears a second way, this time acting on the inputs $c = vG$.  If $u \in \sym(w)$, then $w(uvG) = w(vG)$.  As a consequence, the weight distribution of a linear code $c$ is completely determined by the values of $w$ on representatives of the $\sym(w)$-orbits $[v]$ in $R^k$, together with the sizes of the $\sym(w)$-orbits.  Given a representative $v$, its orbit contributes $\size{[v]}$ to $A_{w(vG)}(C)$.

We now work out the details for the situation where $R = \F_q$ is a finite field and $w$ is any integer-valued weight on $\F_q$, subject to a few restrictions.  As above, assume $w(0)=0$ and $w(r)>0$ for all nonzero $r \in \F_q$.  Also assume $-1 \in \sym(w)$ and $\sym(w) \neq \F_q^\times$, so that $w$ is not equal to a constant multiple of the Hamming weight.

As discussed in Section~\ref{sec:Prelims}, let $\al$ be a primitive element of $\F_q$, and let $t$ be the smallest positive integer such that $\al^t$ generates $\sym(w)$.  Then $t \size{\sym(w)} = \size{\F_q^\times} = q-1$.  As $\sym(w) \neq \F_q^\times$, it follows that $t \geq 2$.

We will construct linear codes over $\F_q$ having dimension $2$, so a generator matrix will have size $2 \times n$.  In particular, the columns of a generator matrix will belong to $\F_q^2$.  The symmetry group $\sym(w)$ acts on $\F_q^2$ by scalar multiplication.  Note that the $\sym(w)$-orbit of a nonzero vector $v \in \F_q^2$ will be contained in the $1$-dimensional vector subspace spanned by $v$.  To write down the set $\OO$ of representatives of the nonzero $\sym(w)$-orbits, we start by writing down certain elements that generate the $1$-dimensional subspaces of $\F_q^2$.  There are $q+1$ linear subspaces of dimension $1$ in $\F_q^2$.  The chosen generators are:
\begin{equation}  \label{eqn:ellVectors}
\ell_\infty = \langle 0, 1 \rangle, \ell_0 = \langle 1,0 \rangle, \text{ and } \ell_j = \langle 1, \al^j \rangle \text{ for } j=1, 2, \ldots, q-1.  
\end{equation}
Note that $\ell_{q-1} = \langle 1, 1 \rangle$.  Each of the generators $\ell_\mu$ contributes a block of representatives to $\OO$, as displayed in Figure~\ref{fig:OO}. 
\begin{figure}[b]
\caption{Representatives of $\sym(w)$-orbits in $\F_q^2$} \label{fig:OO}
\begin{align} 
\OO = \{ &\ell_\infty, \al \ell_\infty, \al^2 \ell_\infty, \ldots, \al^{t-1} \ell_\infty; \notag \\
&\ell_0, \al \ell_0, \al^2 \ell_0, \ldots, \al^{t-1} \ell_0; \notag \\
&\ell_1, \al \ell_1, \al^2 \ell_1, \ldots, \al^{t-1} \ell_1;  \label{eqn:OOReps} \\
&\vdots  \notag \\
&\ell_{q-1}, \al \ell_{q-1}, \al^2 \ell_{q-1}, \ldots, \al^{t-1} \ell_{q-1} \} . \notag
\end{align}
\end{figure}
The elements of $\OO$ also represent the $\sym(w)$-orbits on inputs.

A generator matrix $G$ for a linear code $C$ is determined by specifying how many times each element of $\OO$ appears as a column of $G$.  This information will be encoded in a \emph{multiplicity function} $\eta \colon  \OO \ra \N$.  That is, $\al^i \ell_\mu$ appears $\eta(\al^i \ell_\mu)$ times as a column of $G$.

Form a matrix $W$ of size $t(q+1) \times t(q+1)$, with rows and columns indexed by the elements of $\OO$:
\begin{equation}  \label{eqn:W-Matrix}
W_{\al^i \ell_\mu, \al^j \ell_\nu} = w(\al^i \ell_\mu \cdot \al^j \ell_\nu) , \quad \al^i \ell_\mu, \al^j \ell_\nu \in \OO,
\end{equation}
where $\cdot$ is the standard dot product on $\F_q^2$.
When $\eta$ is the multiplicity function that determines a generator matrix $G$, 
the matrix product $\omega = W \eta$ is a column vector with entries indexed by elements of $\OO$.  The entry indexed by $\al^i \ell_\mu$ is
\[  \omega_{\al^i \ell_\mu} = (W \eta)_{\al^i \ell_\mu} = \sum_{\al^j \ell_\nu \in \OO} w(\al^i \ell_\mu \cdot \al^j \ell_\nu) \eta(\al^j \ell_\nu) . \]
By \eqref{eqn:W-times-eta}, this expression is exactly $w(\al^i \ell_\mu G)$. In summary, the column vector $\omega = W \eta$ consists of the weights of all the codewords of $C$, organized by $\sym$-orbits.  We will refer to $\omega$ as the \emph{list of orbit-weights}.  Remembering that $w(uvG) = w(vG)$ for all $u \in \sym(w)$, the values of $\omega$, together with the sizes of the $\sym(w)$-orbits, determine the weight distribution of the linear code $C$:
\begin{equation}  \label{eqn:WeightDistributionFromOmega}
A_s(C) = \sum_{\substack{\al^i \ell_\mu \in \OO \colon  \\ \omega_{\al^i \ell_\mu}=s}} \size{[\al^i \ell_\mu]} ,
\end{equation}
where $[\al^i \ell_\mu]$ is the $\sym(w)$-orbit of $\al^i \ell_\mu$.

\section{Dual codewords of small weight}

Continue to assume that $R$ is a finite commutative ring with $1$.  Given an $R$-linear code $C \subseteq R^n$, denote by $\la_i$, $i=1, 2, \ldots, n$, the $i$th-coordinate projection $\la_i \colon  C \ra R$, where $\la_i(c_1, c_2, \ldots, c_n) = c_i$ for $c=(c_1, c_2, \ldots, c_n) \in C$.  Call $\la_i$ a \emph{coordinate functional};  $\la_i$ is a homomorphism of $R$-modules.
 
Define the \emph{effective length} of a linear code $C$ to be
\[ \efflength(C) = \size{\{ i  \colon  \la_i(c) \neq 0 \text{ for some $c \in C$}\} } . \] 
If $C$ is given by a generator matrix, the effective length counts the number of nonzero columns of the generator matrix.  The effective length ignores coordinate positions where every codeword is zero. 

A real-valued weight $w$ on $R$ is \emph{egalitarian} \cite{Heise-Honold:homogeneous-egalitarian} if there exists a positive constant $\zeta$ such that $\sum_{r \in I} w(r) = \zeta \size{I}$ for every nonzero ideal $I \subseteq R$. The term \emph{pre-homogeneous} is used in \cite{CHH}.

\begin{ex}  \label{ex:field-egalitarian}
Because a field contains only one nonzero ideal, any weight on a finite field with positive values is egalitarian.  The Lee weight on $\Z/2^a \Z$ is egalitarian with $\zeta = 2^{a-1}$.  The homogeneous weight on a finite Frobenius ring is egalitarian \cite[Corollary~1.6]{greferath-schmidt:combinatorics}.
\end{ex}

\begin{prop} \label{prop:egal-efflength}
Let $w$ be an egalitarian weight with constant $\zeta$ on a finite commutative ring $R$.  If $C \subseteq R^n$ is a linear code, then 
\[  \sum_{c \in C} w(c) = \zeta \size{C} \efflength(C) . \]
\end{prop}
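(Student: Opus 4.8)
The plan is to exploit the additivity of $w$ to reduce the global sum over $C$ to a sum of per-coordinate contributions, and then to recognize that each effective coordinate contributes exactly $\zeta \size{C}$ by the egalitarian hypothesis. First I would expand using $w(c) = \sum_{i=1}^n w(\la_i(c))$ and interchange the order of summation:
\[  \sum_{c \in C} w(c) = \sum_{i=1}^n \sum_{c \in C} w(\la_i(c)) . \]
This turns the problem into understanding, for each fixed coordinate $i$, the single-coordinate sum $\sum_{c \in C} w(\la_i(c))$.

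The key structural observation is that each coordinate functional $\la_i \colon C \ra R$ is an $R$-module homomorphism, so its image $I_i = \la_i(C)$ is an ideal of $R$, and $\la_i$ is surjective onto $I_i$. Because a surjective homomorphism of finite modules has all fibers of equal size, every value $r \in I_i$ is attained by exactly $\size{C}/\size{I_i}$ codewords. Hence
\[  \sum_{c \in C} w(\la_i(c)) = \frac{\size{C}}{\size{I_i}} \sum_{r \in I_i} w(r) . \]
If coordinate $i$ is effective, then $I_i$ is a nonzero ideal, and the egalitarian property gives $\sum_{r \in I_i} w(r) = \zeta \size{I_i}$, so the whole expression collapses to $\zeta \size{C}$, independent of which effective coordinate we chose. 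If coordinate $i$ is not effective, then $\la_i(c) = 0$ for all $c \in C$, and the sum is $0$ since $w(0)=0$.

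Summing over $i$ then yields
\[  \sum_{c \in C} w(c) = \sum_{i \colon i \text{ effective}} \zeta \size{C} = \zeta \size{C} \efflength(C) , \]
which is the claim. I do not expect any serious obstacle: the only point requiring care is the equal-fiber-size argument, which rests on $\la_i$ being a module homomorphism and hence surjective onto the ideal $I_i$. It is precisely the matching between this uniform fiber count $\size{C}/\size{I_i}$ and the egalitarian identity $\sum_{r \in I_i} w(r) = \zeta \size{I_i}$ that makes the $\size{I_i}$ cancel and leaves a contribution depending only on whether the coordinate is effective.
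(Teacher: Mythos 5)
Your proof is correct and follows essentially the same route as the paper: interchange the order of summation, use the uniform fiber size $\size{C}/\size{\la_i(C)}$ of each nonzero coordinate functional, and apply the egalitarian identity to the ideal $\la_i(C)$ so that each effective coordinate contributes $\zeta\size{C}$. No gaps.
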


\begin{proof}
Consider the coordinate functionals $\la_i \colon  C \ra R$, $i=1, 2, \ldots, n$. If $\la_i = 0$, then $i$ does not contribute to the effective length.  If $\la_i \neq 0$, then $i$ contributes to the effective length, and the image $\la_i(C)$ is a nonzero ideal of $R$.   

Every element $b \in \la_i(C)$ is hit $\size{\ker\la_i} = \size{C}/\size{\la_i(C)}$ times.  
Then
\begin{align*}
\sum_{c \in C} w(c) &= \sum_{c \in C} \sum_{i \colon  \la_i \neq 0} w(\la_i(c)) 
= \sum_{i \colon  \la_i \neq 0} \size{\ker \la_i} \sum_{b \in \la_i(C)} w(b)  \\
&= \sum_{i \colon  \la_i \neq 0} \frac{\size{C}}{\size{\la_i(C)}} \sum_{b \in \la_i(C)} w(b)  = \efflength(C) \size{C} \zeta  .  \qedhere
\end{align*}
\end{proof}

\begin{rem}
This result is essentially the same as a proposition in \cite[p.\ 319]{assmus-mattson:axiomatic}, which those authors credit to Slepian \cite[Proposition~6]{slepian:class}.
\end{rem}

\begin{cor}  \label{cor:effective-length}
Let $w$ be an egalitarian weight.  If $C_1, C_2 \subseteq R^n$ are two linear codes with the same $w$-weight distribution, then $C_1, C_2$ have the same effective length, $\efflength(C_1) = \efflength(C_2)$.
\end{cor}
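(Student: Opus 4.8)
The plan is to reduce everything to Proposition~\ref{prop:egal-efflength}, which expresses the total weight $\sum_{c \in C} w(c)$ of a code as $\zeta \size{C} \efflength(C)$. The key observation is that the left-hand side of that identity is an invariant of the weight distribution alone. First I would note that, by the definition of the numbers $A_j(C)$,
\[  \sum_{c \in C} w(c) = \sum_j j \, A_j(C) , \]
since grouping the codewords according to their common weight value $j$ simply replaces the sum over $C$ by a sum weighted by the multiplicities $A_j(C)$.

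Next I would record the companion identity $\size{C} = \sum_j A_j(C)$, which also depends only on the weight distribution. Now suppose $C_1, C_2 \subseteq R^n$ have the same $w$-weight distribution, i.e.\ $A_j(C_1) = A_j(C_2)$ for all $j$. Then both displayed quantities agree for the two codes: $\sum_{c \in C_1} w(c) = \sum_{c \in C_2} w(c)$ and $\size{C_1} = \size{C_2}$.

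Finally I would apply Proposition~\ref{prop:egal-efflength} to each code and combine:
\[  \zeta \size{C_1} \efflength(C_1) = \sum_{c \in C_1} w(c) = \sum_{c \in C_2} w(c) = \zeta \size{C_2} \efflength(C_2) . \]
Because $w$ is egalitarian, the constant $\zeta$ is strictly positive, and since every linear code contains the zero codeword we have $\size{C_1} = \size{C_2} \geq 1 \neq 0$. Dividing through by the common nonzero factor $\zeta \size{C_1} = \zeta \size{C_2}$ yields $\efflength(C_1) = \efflength(C_2)$, as desired.

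There is essentially no serious obstacle here; the corollary is a direct consequence of the proposition once one notices that both $\sum_{c} w(c)$ and $\size{C}$ are read off from the weight distribution. The only point requiring a moment's care is the cancellation at the end, which needs $\zeta > 0$ (guaranteed by the egalitarian hypothesis) and $\size{C_1} = \size{C_2} > 0$ (guaranteed because the two codes share a weight distribution and every code is nonempty).
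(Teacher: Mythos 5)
Your argument is correct and is essentially the paper's own proof: both extract $\size{C}=\sum_j A_j(C)$ and $\sum_{c\in C}w(c)=\sum_j j\,A_j(C)$ from the common weight distribution and then invoke Proposition~\ref{prop:egal-efflength}. Your extra care about cancelling the nonzero factor $\zeta\size{C_1}$ is a fine (if routine) addition that the paper leaves implicit.
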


\begin{proof}
The hypothesis is that $A_s(C_1) = A_s(C_2)$ for all $s$.  Because $\sum_s A_s(C) = \size{C}$, we have $\size{C_1} = \size{C_2}$.  As $\sum_s s A_s(C) = \sum_{c \in C} w(c)$, we also have $\sum_{c \in C_1} w(c) = \sum_{c \in C_2} w(c)$.  
Now apply Proposition~\ref{prop:egal-efflength}.
\end{proof}

Concentrate now on linear codes over a finite field $\F_q$.  By Example~\ref{ex:field-egalitarian}, every weight on $\F_q$ is egalitarian.  The next result is immediate from Corollary~\ref{cor:effective-length}.

\begin{cor}  \label{cor:effective-length-fields}
Let $w$ be any weight on $\F_q$.  If $C_1, C_2 \subseteq \F_q^n$ are linear codes with the same $w$-weight distribution, then $C_1, C_2$ have the same effective length, $\efflength(C_1) = \efflength(C_2)$.
\end{cor}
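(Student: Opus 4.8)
The plan is to recognize this corollary as nothing more than the specialization of Corollary~\ref{cor:effective-length} to the field case, so the only thing requiring verification is that the egalitarian hypothesis of that corollary holds automatically for every weight on $\F_q$.

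First I would unwind the egalitarian condition. A weight $w$ is egalitarian when there is a single positive constant $\zeta$ satisfying $\sum_{r \in I} w(r) = \zeta\size{I}$ for \emph{every} nonzero ideal $I \subseteq R$. The content of the definition lies in demanding that one constant $\zeta$ serve all nonzero ideals simultaneously; over a ring with many ideals this is a genuine constraint. A finite field, however, has precisely one nonzero ideal, namely $\F_q$ itself. Consequently there is no cross-ideal compatibility to impose: one simply declares $\zeta = \frac{1}{q}\sum_{r \in \F_q} w(r)$, and this $\zeta$ is strictly positive because $w$ is assumed to take positive values on each of the $q-1$ nonzero field elements. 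This is exactly the content of Example~\ref{ex:field-egalitarian}.

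With $w$ now known to be egalitarian, I would apply Corollary~\ref{cor:effective-length} directly. Its hypothesis---an egalitarian weight together with two linear codes $C_1, C_2 \subseteq \F_q^n$ sharing the same $w$-weight distribution---is met verbatim, and its conclusion is precisely $\efflength(C_1) = \efflength(C_2)$. No further argument is needed.

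There is no genuine obstacle here. The single conceptual point, and it is a modest one, is to notice that the egalitarian property, which can and does fail for weights on rings with a richer ideal lattice, is forced on a field by the scarcity of its ideals. All the substantive work was already carried out in Proposition~\ref{prop:egal-efflength} and Corollary~\ref{cor:effective-length}; this corollary merely records the consequence in the field setting used throughout the rest of the paper.
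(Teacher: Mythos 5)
Your proposal is correct and follows exactly the paper's route: it invokes Example~\ref{ex:field-egalitarian} to note that every positive weight on $\F_q$ is egalitarian (a finite field having only one nonzero ideal), and then applies Corollary~\ref{cor:effective-length} directly. Nothing further is needed.
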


A vector $x \in \F_q^n$ will be called a \emph{singleton} if $x$ has exactly one nonzero entry.  A vector $x$ is called a \emph{doubleton} if it has exactly two nonzero entries.  Define 
\begin{align*}
\asing_j(C) &= \size{\{ x \in C \colon  \text{$x$ is a singleton and $w(x)=j$}\} } , \\
\adble_j(C) &= \size{\{ x \in C \colon  \text{$x$ is a doubleton and $w(x)=j$}\} } .
\end{align*}
Over $\F_q$, singleton dual codewords arise only from zero-columns in the original generator matrix.  This leads to the next result.

\begin{prop}  \label{prop:singletonsDoNotDistinguish}
Let $w$ be any weight on $\F_q$.  If $C_1, C_2 \subseteq \F_q^n$ are linear codes with the same $w$-weight distribution, then $\asing_s(C_1^\perp) = \asing_s(C_2^\perp)$ for all $s$.
\end{prop}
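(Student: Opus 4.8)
The plan is to identify exactly which dual codewords are singletons and to show that their number depends only on the effective length of the code together with data intrinsic to the weight $w$. First I would observe that a singleton $x = a e_i$, nonzero only in position $i$ with $a \in \F_q^\times$, lies in $C^\perp$ precisely when $x \cdot c = a c_i = 0$ for every $c \in C$. Over the field $\F_q$, and since $a \neq 0$, this forces $c_i = 0$ for all $c \in C$; that is, position $i$ is a zero-column (``dead'') coordinate, one that contributes nothing to the effective length. Conversely, if position $i$ is dead, then every vector supported solely at $i$ lies in $C^\perp$. This is exactly the remark preceding the statement that over $\F_q$ singleton dual codewords arise only from zero columns, and it is the one place where the field hypothesis is essential: over a general ring $a c_i = 0$ for all $c$ need not imply $c_i = 0$.

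Next I would count. Let $\gamma_s = \size{\{ a \in \F_q^\times \colon w(a) = s \}}$, a quantity depending only on the weight $w$ and not on the code. For each dead position $i$, the singleton dual codewords supported at $i$ are exactly the $q-1$ vectors $a e_i$ with $a \in \F_q^\times$, and such a codeword has weight $w(a)$; hence position $i$ contributes $\gamma_s$ to $\asing_s(C^\perp)$. Since the number of dead positions of a length-$n$ code $C$ is $n - \efflength(C)$, summing over dead positions yields the clean formula
\[ \asing_s(C^\perp) = (n - \efflength(C)) \, \gamma_s . \]

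Finally I would invoke Corollary~\ref{cor:effective-length-fields}: because $C_1$ and $C_2$ have the same $w$-weight distribution, they have equal effective lengths, $\efflength(C_1) = \efflength(C_2)$. As both codes have length $n$ and $\gamma_s$ is the same for each, the displayed formula immediately gives $\asing_s(C_1^\perp) = \asing_s(C_2^\perp)$ for every $s$. I do not expect any real obstacle here once the characterization of singleton dual codewords is in hand; the entire content is the field-specific identification of singletons with dead coordinates, and the remainder is bookkeeping together with a direct appeal to the already-established equality of effective lengths.
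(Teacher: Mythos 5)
Your proposal is correct and follows essentially the same route as the paper: identify singleton dual codewords with the zero (dead) coordinates, count $(n-\efflength(C))$ of them each contributing the number of field elements of weight $s$, and conclude via Corollary~\ref{cor:effective-length-fields}. The paper phrases the first step in terms of coordinate functionals ($\la_i r = 0$ with $r$ invertible forces $\la_i = 0$), but the content is identical.
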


\begin{proof}
Any nonzero element $r \in \F_q$ is invertible.  If $r$ annihilates a coordinate functional $\la_i$, i.e., $\la_i r =0$, then $\la_i = 0$.  Thus, $\asing_s(C_1^\perp) = \asing_s(\F_q) (n - \efflength(C_1))$, where $\asing_s(\F_q)$ is the number of elements of $\F_q$ of weight $s$.  The result now follows from Corollary~\ref{cor:effective-length-fields}.
\end{proof}

For doubletons, we assume a linear code is presented via a multiplicity function $\eta$, as in Section~\ref{sec:MultFcns}.  We also assume $w$ has the properties that $-1 \in \sym(w)$ and that the minimum positive value $\mathring{w}$ of $w$ is taken on a unique $\sym(w)$-orbit.

\begin{lem}  \label{lem:2MathringInequality}
Let $w$ be a weight on a finite ring $R$ with $w(0)=0$ and $w(r)>0$ for nonzero $r \in R$.  Let $\mathring{w}$ be the smallest positive value of $w$ on $R$.  If $v \in R^n$ satisfies $\mathring{w} \leq w(v) < 2 \mathring{w}$, then $v$ is a singleton.  If $w(v) = 2 \mathring{w}$, then $v$ is either a singleton \textup{(}whose nonzero entry has weight $2 \mathring{w}$\textup{)} or a doubleton \textup{(}both of whose nonzero entries have weight $\mathring{w}$\textup{)}.  In particular, for any linear code $C \subseteq R^n$, 
\[  A_{2 \mathring{w}}(C) = \asing_{2 \mathring{w}}(C) + \adble_{2 \mathring{w}}(C) . \]
\end{lem}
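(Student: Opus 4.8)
The plan is to exploit the additivity of $w$ together with the defining minimality of $\mathring{w}$. The single observation that drives everything is this: if $v = (v_1, \ldots, v_n) \in R^n$ has exactly $k$ nonzero entries, then since $w(v) = \sum_{i : v_i \neq 0} w(v_i)$ and each nonzero entry satisfies $w(v_i) \geq \mathring{w}$ by the minimality of $\mathring{w}$, we get the lower bound $w(v) \geq k\mathring{w}$. Thus the number of nonzero entries of $v$ is tightly controlled by $w(v)$, and both weight hypotheses in the statement will simply force $k$ into a small range.

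For the first claim, I would assume $\mathring{w} \leq w(v) < 2\mathring{w}$. Since $w(v) > 0$ we have $v \neq 0$, so $k \geq 1$; and if $k \geq 2$ the bound above gives $w(v) \geq 2\mathring{w}$, contradicting $w(v) < 2\mathring{w}$. Hence $k = 1$ and $v$ is a singleton. For the second claim, I would assume $w(v) = 2\mathring{w}$; the same bound rules out $k \geq 3$ (which would force $w(v) \geq 3\mathring{w} > 2\mathring{w}$) and rules out $k = 0$, leaving $k \in \{1, 2\}$. If $k = 1$, the unique nonzero entry carries all the weight, so it has weight $2\mathring{w}$. If $k = 2$, the two nonzero entries have weights $a, b$ with $a + b = 2\mathring{w}$ and $a, b \geq \mathring{w}$; the only solution is $a = b = \mathring{w}$, since any strict inequality $a > \mathring{w}$ would force $b < \mathring{w}$, violating minimality.

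The third claim is then immediate: by the dichotomy just established, every $v \in C$ with $w(v) = 2\mathring{w}$ is either a singleton or a doubleton, and these two cases are mutually exclusive because a singleton has one nonzero entry while a doubleton has two. Partitioning the set $\{v \in C : w(v) = 2\mathring{w}\}$ accordingly and taking cardinalities yields $A_{2\mathring{w}}(C) = \asing_{2\mathring{w}}(C) + \adble_{2\mathring{w}}(C)$.

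Honestly there is no serious obstacle here; the entire argument is a packing estimate, and the only point requiring a moment's care is the doubleton analysis, where one must invoke minimality of $\mathring{w}$ a second time to conclude that a weight-$2\mathring{w}$ doubleton cannot have an ``unbalanced'' split such as $(\mathring{w}+1, \mathring{w}-1)$. I would also note that this lemma uses neither $-1 \in \sym(w)$ nor the uniqueness-of-minimal-orbit hypothesis mentioned just before it; it holds for an arbitrary positive integer-valued weight on any finite ring, so I would state and prove it in that generality.
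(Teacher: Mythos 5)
Your proof is correct and follows exactly the paper's argument: the single inequality $w(v)\geq k\mathring{w}$ for a vector with $k$ nonzero entries, followed by the case analysis on $k$. The paper's proof is just a two-sentence version of the same packing estimate, so there is nothing to add.
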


\begin{proof}
If a vector $v \in R^n$ has $n'$ nonzero entries, then $w(v) \geq n' \mathring{w}$.  Equality occurs if and only if each nonzero entry has weight $\mathring{w}$.
\end{proof}

\begin{prop}  \label{prop:2w-weights}
Let $w$ be a weight on $\F_q$ with $-1 \in \sym(w)$ and $\mathring{w}$ achieved on a unique $\sym(w)$-orbit.  Let $C$ be a linear code over $\F_q$ with no zero-positions given by a multiplicity function $\eta$.  Then
\[  A_{2 \mathring{w}}(C^\perp) = \adble_{2 \mathring{w}}(C^\perp) = \size{\sym(w)} \sum_{\iota} \binom{\eta_\iota}{2} . \]
\end{prop}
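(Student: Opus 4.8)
The plan is to prove the two equalities in turn. For the first equality $A_{2\mathring{w}}(C^\perp) = \adble_{2\mathring{w}}(C^\perp)$, I would invoke Lemma~\ref{lem:2MathringInequality}, which gives $A_{2\mathring{w}}(C^\perp) = \asing_{2\mathring{w}}(C^\perp) + \adble_{2\mathring{w}}(C^\perp)$, and then argue that the singleton term vanishes. Since $C$ is presented by a multiplicity function on $\OO$ with no zero-positions, every column of its generator matrix is nonzero, so $\efflength(C) = n$. By the computation in the proof of Proposition~\ref{prop:singletonsDoNotDistinguish}, a singleton dual codeword is supported only at a zero-position of the generator matrix, of which there are none here; hence $\asing_{2\mathring{w}}(C^\perp) = 0$, and the first equality follows.

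For the second equality I would count doubleton dual codewords of weight $2\mathring{w}$ directly. By Lemma~\ref{lem:2MathringInequality}, such a codeword $y$ has exactly two nonzero entries, each of weight $\mathring{w}$; because $\mathring{w}$ is achieved on a single $\sym(w)$-orbit $S$, both entries lie in $S$, and $\size{S} = \size{\sym(w)}$ since the multiplicative action of $\sym(w) \subseteq \F_q^\times$ on $\F_q^\times$ is free. Writing $G_i, G_j$ for the two columns of the generator matrix at the support positions $\{i,j\}$ of $y$, membership $y \in C^\perp$ is equivalent to $y_i G_i + y_j G_j = 0$ in $\F_q^2$. With $y_i, y_j$ nonzero this forces $G_i$ and $G_j$ to span the same line $\ell_\mu$ and pins down the ratio of $y_j$ to $y_i$.

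The \emph{crux}, and the step I expect to be the main obstacle, is to show that only pairs of positions carrying the \emph{identical} element of $\OO$ can support such a codeword. Writing $G_i = \al^m \ell_\mu$ and $G_j = \al^{m'} \ell_\mu$ from the normal form \eqref{eqn:OOReps}, the relation $y_i G_i + y_j G_j = 0$ becomes $y_j = -\al^{m-m'} y_i$, and keeping $y_j$ inside the single orbit $S$ requires $-\al^{m-m'} \in \sym(w)$. Here the hypothesis $-1 \in \sym(w)$ lets me drop the sign, reducing the condition to $\al^{m-m'} \in \sym(w) = \langle \al^t \rangle$, i.e.\ $t \mid (m-m')$; since $0 \leq m, m' \leq t-1$, this holds only when $m = m'$, that is, $G_i = G_j$. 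Thus a contributing pair must consist of two positions bearing the same orbit representative $\la_\iota \in \OO$, and for each such pair the defining relation $y_j = -y_i$ (again using $-1 \in \sym(w)$ to place $-y_i$ back in $S$) admits exactly $\size{S} = \size{\sym(w)}$ choices of $y_i \in S$, each giving a distinct codeword supported on $\{i,j\}$.

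Finally I would assemble the count: the number of unordered pairs of positions carrying the common representative $\la_\iota$ is $\binom{\eta_\iota}{2}$, and summing the contribution $\size{\sym(w)}$ per pair over all $\iota$ yields $\adble_{2\mathring{w}}(C^\perp) = \size{\sym(w)} \sum_\iota \binom{\eta_\iota}{2}$. The only remaining points to check are the freeness of the $\sym(w)$-action (giving $\size{S} = \size{\sym(w)}$) and the bookkeeping verifying that each doubleton is counted exactly once via its support and its value at one position — both routine once the structural dichotomy above is established.
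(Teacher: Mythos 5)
Your proposal is correct and follows essentially the same route as the paper's proof: reduce to doubletons via Lemma~\ref{lem:2MathringInequality} and the absence of zero-positions, use the unique minimizing orbit together with $-1 \in \sym(w)$ and the linear-dependence relation $y_iG_i + y_jG_j = 0$ to force the two support positions to carry the identical representative from $\OO$, and then count $\size{\sym(w)}$ codewords per pair of such positions. The only cosmetic difference is that you phrase the key step as $t \mid (m-m')$ with $0 \le m, m' \le t-1$, whereas the paper argues that $\la_i$ and $\la_j$ lie in the same $\sym(w)$-orbit and invokes uniqueness of orbit representatives; these are the same argument.
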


\begin{proof}
Because there are no zero-positions in $C$, $\asing_s(C^\perp) = 0$ for all $s>0$.  Then $A_{2 \mathring{w}}(C^\perp) = \adble_{2 \mathring{w}}(C^\perp)$, by Lemma~\ref{lem:2MathringInequality}.

Let $x \in C^\perp$ be a doubleton with $w(x) = 2 \mathring{w}$.  Being a doubleton, $x$ has exactly two nonzero entries, say, $x_i, x_j$.  Because $2 \mathring{w} = w(x) = w(x_i) + w(x_j)$ and each of $w(x_i), w(x_j)$ is at least  $\mathring{w}$, we see that $w(x_i) = w(x_j) = \mathring{w}$.  By hypothesis, $x_i, x_j$ must belong to the unique $\sym(w)$-orbit that minimizes $w$.  By the hypothesis $-1 \in \sym(w)$, we see that $-x_i$ belongs to the same $\sym(w)$-orbit as $x_i, x_j$.  

Let the minimizing orbit be $[\mathring{r}]$.  Then $x_i = h_i \mathring{r}$ and $x_j =  h_j \mathring{r}$ for some $h_i, h_j \in \sym(w)$.  Then $-x_i x_j^{-1} = -h_i h_j^{-1} \in \sym(w)$.

Because $x \in C^\perp$, we have $\la_i x_i + \la_j x_j = 0$.  In particular, $\la_i, \la_j$ are linearly dependent, and $\la_j = - \la_i x_i x_j ^{-1} = \la_i (-h_i h_j^{-1})$.  In particular, $\la_j , \la_i$ are in the same $\sym(w)$-orbit.  By unique choice of representatives in $\OO$, we have $\la_j = \la_i$.  This, in turn implies $x_j = - x_i$.

We see that contributions to $\adble_{2 \mathring{w}}(C^\perp)$ occur by choosing two coordinate functionals from the same $\sym(w)$-orbit and choosing an element $h \in \sym(w)$.  Then set $x_i = h \mathring{r}$ and $x_j = -h \mathring{r}$.
\end{proof}

\begin{rem}
If there are $n_0 = n - \efflength(C)$ zero-positions, then there are an additional $\size{\sym(w)}^2 \binom{n_0}{2}$ doubletons of weight $2 \mathring{w}$ in $C^\perp$.  There is no longer any relation between $x_i$ and $x_j$.
\end{rem}

\section{Some structured matrices}
Given a weight $w$ on $\F_q$, the matrix $W$ defined in \eqref{eqn:W-Matrix} turns out to have considerable structure.
In this section we examine this and a related structure in some detail.

Fix a positive integer $n$.  Let $I_n$ be the $n \times n$ identity matrix and $J_n$ be the $n \times n$ all-one matrix; i.e., every entry of $J_n$ equals $1$.  For one type of structure, consider all complex matrices of the form
\begin{equation} \label{eqn:simpleStructured}
M_{x,y} = (x-y) I_n + y J_n = \begin{bmatrix}
x & y & y &\cdots & y \\
y & x & y & \cdots & y \\
\vdots & & \ddots &  & \vdots \\
y & y & \cdots & x & y \\
y & y & \cdots & y & x
\end{bmatrix} . 
\end{equation}

\begin{lem}
If $M_{x,y}$ has the form in \eqref{eqn:simpleStructured}, then
\[  \det M_{x,y} = (x+(n-1)y)(x-y)^{n-1} . \]
\end{lem}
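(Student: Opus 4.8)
The plan is to exploit the decomposition $M_{x,y} = (x-y)I_n + yJ_n$ that is already built into the definition, and to read the determinant off the spectrum of the all-one matrix $J_n$. First I would record the eigenstructure of $J_n$. Writing $\mathbf{1} = (1,1,\ldots,1)^\top$, we have $J_n = \mathbf{1}\mathbf{1}^\top$, so $J_n$ has rank one; consequently $0$ is an eigenvalue with eigenspace equal to the hyperplane $\{v \colon \sum_i v_i = 0\}$, which has dimension $n-1$, and the one remaining eigenvalue is $\tr J_n = n$, realized on the eigenvector $\mathbf{1}$ (indeed $J_n \mathbf{1} = n\mathbf{1}$).

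Second, I would transfer this to $M_{x,y}$. Since $I_n$ acts as the scalar $1$ on every vector, each eigenvector of $J_n$ is automatically an eigenvector of $M_{x,y} = (x-y)I_n + yJ_n$: on $\mathbf{1}$ the matrix acts by $(x-y) + yn = x + (n-1)y$, while on each vector of the hyperplane $\sum_i v_i = 0$ it acts by $(x-y) + y\cdot 0 = x-y$. Thus the eigenvalues of $M_{x,y}$, counted with multiplicity, are $x+(n-1)y$ once and $x-y$ with multiplicity $n-1$. Because the determinant of a matrix over $\C$ equals the product of its eigenvalues counted with algebraic multiplicity, I conclude $\det M_{x,y} = (x+(n-1)y)(x-y)^{n-1}$, as claimed.

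There is essentially no obstacle beyond keeping the multiplicities straight; the only point worth a word is that the formula ``determinant $=$ product of eigenvalues'' does not require $M_{x,y}$ to be diagonalizable—it follows from the factorization of the characteristic polynomial over $\C$—so no genuine degeneracy issue arises even when $x=y$. As a fully elementary alternative that avoids spectral language, one can instead use column operations: adding columns $2, \ldots, n$ to the first turns every entry of the first column into the row-sum $x+(n-1)y$, which factors out; then subtracting $y$ times the resulting all-one first column from each of the remaining columns clears every off-diagonal entry, leaving a lower-triangular matrix with diagonal $1, x-y, \ldots, x-y$. Its determinant is $(x-y)^{n-1}$, so multiplying back the factored scalar recovers the same answer $(x+(n-1)y)(x-y)^{n-1}$.
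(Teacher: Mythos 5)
Your proof is correct, but your primary argument takes a genuinely different route from the paper's. The paper proves the lemma purely by elementary row and column operations: subtract the first row from every other row, then add columns $2$ through $n$ to the first column, producing an upper-triangular matrix whose diagonal exhibits the claimed factors. Your main argument is instead spectral: you diagonalize $J_n$ (eigenvalue $n$ on $\mathbf{1}$, eigenvalue $0$ on the hyperplane $\sum_i v_i = 0$), observe that adding $(x-y)I_n$ shifts the spectrum, and multiply the eigenvalues. Both are sound; the spectral route gives more---it identifies $\mathbf{1}$ as an eigenvector of $M_{x,y}$ with eigenvalue $x+(n-1)y$, which is exactly the mechanism the paper later exploits in Corollary~\ref{cor:rowSumsOfW} and Lemma~\ref{lem:sameLength}---while the paper's manipulation is more elementary and self-contained. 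Your closing remark that ``determinant equals product of eigenvalues'' does not need diagonalizability is true but unnecessary here, since you have already produced an explicit eigenbasis ($\mathbf{1}$ together with the hyperplane spans $\C^n$). Your alternative column-operation argument is essentially the paper's proof in a slightly different order; the only cosmetic caveat is that ``factoring out'' $x+(n-1)y$ from the first column is an identity of polynomials in $x,y$, so it is harmless even when that quantity vanishes.
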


\begin{proof}
Subtract the first row from every other row.  Then add columns $2$ through $n$ to the first column.  These row and column operations do not change the value of the determinant.  The result of the row and column operations is an upper triangular matrix with determinant as claimed.
\end{proof}

\begin{lem}  \label{lem:preserveInverseType1}
Suppose $M_{x,y}$ is as in \eqref{eqn:simpleStructured}.  If $M_{x,y}$ is invertible, then $M_{x,y}^{-1}$ also has the form in \eqref{eqn:simpleStructured}.
\end{lem}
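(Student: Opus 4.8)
The plan is to exploit the fact that matrices of the shape \eqref{eqn:simpleStructured} constitute a two-dimensional commutative algebra that is closed under multiplication, and then to locate the inverse inside this algebra by a direct coefficient computation.

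First I would rewrite $M_{x,y}$ in the basis $\{I_n, J_n\}$: setting $a = x - y$ and $b = y$, we have $M_{x,y} = aI_n + bJ_n$, and conversely every matrix $aI_n + bJ_n$ equals $M_{a+b,\,b}$. Thus the claim is equivalent to showing that the set $\mathcal{A} = \{ aI_n + bJ_n : a,b \in \C \}$ contains the inverse of each of its invertible members. The key structural identity is $J_n^2 = nJ_n$, which yields the multiplication rule
\[ (aI_n + bJ_n)(a'I_n + b'J_n) = aa'\,I_n + (ab' + a'b + nbb')\,J_n , \]
showing that $\mathcal{A}$ is closed under products.

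Next I would seek $M_{x,y}^{-1}$ directly in the form $a'I_n + b'J_n$. Matching the product above against $I_n$ gives the two scalar equations $aa' = 1$ and $ab' + a'b + nbb' = 0$. The first forces $a' = 1/a$, and substituting into the second gives $b'(a + nb) = -b/a$. This is the only place the invertibility hypothesis enters: by the determinant formula of the preceding lemma, $\det M_{x,y} = (x+(n-1)y)(x-y)^{n-1}$, so invertibility says precisely that $x \neq y$ (equivalently $a \neq 0$) and $x + (n-1)y \neq 0$ (equivalently $a + nb \neq 0$). These two nonvanishing conditions are exactly what let me solve uniquely for $a' = 1/a$ and $b' = -b/\bigl(a(a+nb)\bigr)$, so that $M_{x,y}^{-1} = a'I_n + b'J_n = M_{a'+b',\,b'}$ has the required form.

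The computation is entirely routine; the only point needing care, and the one genuinely using the hypothesis, is confirming that both denominators $a$ and $a + nb$ are nonzero, which I would read off directly from the factored determinant rather than re-deriving it.
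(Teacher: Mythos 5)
Your proof is correct and takes essentially the same route as the paper's: both multiply two members of the family, match the product against the identity matrix, and solve the resulting pair of scalar equations, with solvability guaranteed by exactly the two factors $(x-y)$ and $(x+(n-1)y)$ of $\det M_{x,y}$. Your only departure is the reparametrization in the basis $\{I_n, J_n\}$ using $J_n^2 = nJ_n$, which is a cosmetic simplification of the same computation.
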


\begin{proof}
Multiply two matrices of the form given in \eqref{eqn:simpleStructured}: $M_{x,y} M_{z,w}$.
The product has the form $M_{\delta, \epsilon}$, with
\begin{align*}
\delta &= xz+(n-1)yw, \\
\epsilon &= xw+yz+(n-2)yw .
\end{align*}
Set $\delta=1, \epsilon=0$, and solve the resulting system of equations for $z,w$:
\[  \begin{bmatrix}
x & (n-1)y \\ y & x + (n-2)y
\end{bmatrix}
\begin{bmatrix}
z \\ w
\end{bmatrix} =
\begin{bmatrix}
1 \\ 0
\end{bmatrix}. \]
A unique solution exists, because the coefficient matrix is invertible.  Indeed, its determinant, $(x-y)(x+(n-1)y)$, has the same factors as $\det M_{x,y}$, which is nonzero by hypothesis.
\end{proof}

A second type of structure comes from matrices that look like \eqref{eqn:W-Matrix}.  It will be fruitful to generalize the discussion slightly.

Fix a subgroup $H \subsetneq \F_q^\times$.  (In the context of \eqref{eqn:W-Matrix}, $H = \sym(w)$ for a weight $w$ on $\F_q$.)  
Fix a primitive element $\al$ of $\F_q^\times$, and let $t$ be the smallest positive integer so that $\al^t$ generates $H$; then $t \size{H} = \size{\F_q^\times} = q-1$.  As $H \neq \F_q^\times$, $t \geq 2$.  We say that a function $a \colon  \F_q \ra \C$ is \emph{$H$-invariant} if $a(hr) = a(r)$ for all $r \in \F_q$ and all $h \in H$.  By $H$-invariance, every value of $a$ is one of $a(0), a(1), a(\al), \ldots, a(\al^{t-1})$.  (In displays, we sometimes write $a_r$ instead of $a(r)$.)

Generalizing \eqref{eq:R-W}, we associate to every $H$-invariant function $a \colon  \F_q \ra \C$ a $t \times t$ circulant matrix $\mathcal{A}$ whose rows and columns are indexed by $\{0, 1, \ldots, t-1\}$:
\begin{equation}  \label{eqn:CirculantA}
\mathcal{A}_{i,j} = a(\al^{i+j}) , \quad i,j \in \{0, 1, \ldots, t-1\} .
\end{equation}

Recall the vectors $\ell_\mu$, $\mu \in \{\infty, 0, 1, \ldots, q-1\}$, from \eqref{eqn:ellVectors}, and $\OO$ from \eqref{eqn:OOReps}, using nonzero $H$-orbits.  The dot product of $x, y \in \F_q^2$ is denoted by $x \cdot y$.

\begin{lem}  \label{lem:one-dot}
For the vector $\ell_\mu$, $\mu \in \{\infty, 0, 1, \ldots, q-1\}$, define $f \colon  \F_q^2 \ra \F_q$ by $f(x) = x \cdot \ell_\mu$.  Then there is a unique $\nu \in \{\infty, 0, 1, \dots, q-1\}$ such that $f(\ell_\nu) = 0$.
\end{lem}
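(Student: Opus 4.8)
The plan is to recognize the dot product on $\F_q^2$ as a nondegenerate symmetric bilinear form and to argue entirely in terms of one-dimensional subspaces (lines). Since the Gram matrix of $x \cdot y = x_1 y_1 + x_2 y_2$ is the $2 \times 2$ identity, whose determinant is $1$, the form is nondegenerate over every $\F_q$. Consequently the orthogonal complement $\langle \ell_\mu \rangle^\perp = \{ x \in \F_q^2 \colon x \cdot \ell_\mu = 0 \}$ of the line spanned by the nonzero vector $\ell_\mu$ is itself a one-dimensional subspace, because $\dim \langle \ell_\mu \rangle + \dim \langle \ell_\mu \rangle^\perp = 2$.

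Next I would record that the $q+1$ vectors $\ell_\nu$, $\nu \in \{\infty, 0, 1, \ldots, q-1\}$, from \eqref{eqn:ellVectors} span $q+1$ pairwise distinct lines, and that these exhaust all of the one-dimensional subspaces of $\F_q^2$, of which there are exactly $q+1$. Thus $\nu \mapsto \langle \ell_\nu \rangle$ is a bijection from the index set onto the set of lines. This is just the enumeration already built into the choice of representatives in \eqref{eqn:ellVectors}.

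Existence and uniqueness then follow together. The line $\langle \ell_\mu \rangle^\perp$ is one of the $q+1$ lines, so by the bijection there is exactly one index $\nu$ with $\langle \ell_\nu \rangle = \langle \ell_\mu \rangle^\perp$; for this $\nu$ we have $\ell_\nu \in \langle \ell_\mu \rangle^\perp$, i.e.\ $f(\ell_\nu) = \ell_\nu \cdot \ell_\mu = 0$. For any other index $\nu'$, the line $\langle \ell_{\nu'} \rangle$ differs from $\langle \ell_\mu \rangle^\perp$; two distinct lines in $\F_q^2$ meet only in $0$, so $\ell_{\nu'} \notin \langle \ell_\mu \rangle^\perp$ and hence $f(\ell_{\nu'}) = \ell_{\nu'} \cdot \ell_\mu \neq 0$. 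This establishes uniqueness.

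I expect no serious obstacle here: the statement is a clean consequence of the nondegeneracy of the standard form on $\F_q^2$ together with the bijection between the indices $\nu$ and the lines of $\F_q^2$. The only point requiring a moment's care is confirming that the listed vectors $\ell_\nu$ really do realize all $q+1$ lines exactly once, which can alternatively be checked by the direct computation $\ell_0 \cdot \ell_\infty = 0$ and $\ell_\infty \cdot \ell_0 = 0$, and, for $\mu = j \in \{1, \ldots, q-1\}$, solving $1 + \al^{\nu+j} = 0$ for a unique $\nu$ while noting that neither $\ell_\infty \cdot \ell_j = \al^j$ nor $\ell_0 \cdot \ell_j = 1$ vanishes; this elementary verification works uniformly in all characteristics.
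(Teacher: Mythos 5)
Your proof is correct and follows essentially the same route as the paper: both rest on the nondegeneracy of the dot product (so $\langle \ell_\mu\rangle^\perp$ is a line) together with the fact that the $\ell_\nu$ are pairwise linearly independent representatives of the lines of $\F_q^2$. The only cosmetic difference is that the paper establishes existence by writing down $\nu$ explicitly (e.g.\ $\al^\nu = -\al^{-\mu}$) while you deduce it from the completeness of the enumeration of all $q+1$ lines, an argument you also back up with the same explicit computation.
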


\begin{proof}
The dot product on $\F_q^2$ is nondegenerate, so $f$ is nonzero.  Being linear,  $f$ is surjective and hence $\dim \ker f = 1$.  As any two distinct $\ell_\nu, \ell_\sigma$ are linearly independant, at most one $\ell_\nu$ can be contained in $\ker f$.  If $\mu = 0$ or $\infty$, use $\nu = \infty$ or $0$, respectively.  For $\mu \neq 0, \infty$, use $\nu$ such that $\al^\nu = -\al^{-\mu}$.
\end{proof}

Write $\ell_{\mu}^\perp$ for $\ell_{\nu}$ when $\ell_{\nu} \cdot \ell_{\mu} = 0$.  The next lemma describes what happens to $H$-orbits under certain linear transformations $\F_q^2 \ra \F_q^2$.

\begin{lem}  \label{lem:two-dot}
For $\al^i \ell_\mu, \al^j \ell_\nu \in \OO$, define $F \colon  \F_q^2 \ra \F_q^2$ by $F(x) = \langle x \cdot \al^i \ell_\mu, x \cdot \al^j \ell_\nu \rangle$.  As $x = \al^k \ell_\sigma$ varies over $\OO$,   
\begin{itemize}
\item when $\mu=\nu$: $F$ sends the $t$ orbits contained in $\ell_\mu^\perp$ to $0$, and hits each of the $t$ orbits contained in $\langle 1, \al^{j-i} \rangle$ a total of $q$ times;
\item  when $\mu \neq \nu$: $F$ hits every nonzero $H$-orbit in $\F_q^2$ exactly once. 
\end{itemize}
\end{lem}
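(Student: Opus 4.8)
The plan is to exploit that $F$ is $\F_q$-linear and therefore $H$-equivariant: $F(hx) = hF(x)$ for every $h \in H$, since $H \subseteq \F_q^\times$ acts by scalars. Consequently $F$ carries $H$-orbits to $H$-orbits, and counting how often a given target orbit is hit reduces to tracking $F$ on the orbit representatives listed in $\OO$. I would dispose of the case $\mu \neq \nu$ first. Since $\ell_\mu$ and $\ell_\nu$ span distinct lines, they form a basis of $\F_q^2$, so by nondegeneracy of the dot product the two functionals $x \mapsto x \cdot \ell_\mu$ and $x \mapsto x \cdot \ell_\nu$ are linearly independent; hence $x \mapsto \langle x \cdot \ell_\mu, x \cdot \ell_\nu \rangle$ is a linear isomorphism of $\F_q^2$. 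Postcomposing with the invertible diagonal scaling $\langle a,b\rangle \mapsto \langle \al^i a, \al^j b\rangle$ shows that $F$ itself is a linear isomorphism. A bijection commuting with $H$-scaling permutes the nonzero $H$-orbits, so as $x$ runs over $\OO$ the image $F(x)$ runs over exactly one representative of each nonzero orbit, giving the second bullet.

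For the case $\mu = \nu$, I would factor $F$ through the single functional $g(x) = x \cdot \ell_\mu$. Writing $F(x) = \langle \al^i g(x), \al^j g(x) \rangle = \al^i g(x)\,\langle 1, \al^{j-i}\rangle$ exhibits $F$ as a rank-one map: its image lies on the line $\langle 1, \al^{j-i}\rangle$, and $F(x)=0$ precisely when $g(x)=0$, i.e.\ when $x \in \ker g = \ell_\mu^\perp$. By Lemma~\ref{lem:one-dot} the line $\ell_\mu^\perp$ is one of the generators $\ell_\nu$, and by the structure of $\OO$ in \eqref{eqn:OOReps} it contributes exactly $t$ orbit representatives $\al^k \ell_\mu^\perp$, $k=0,\ldots,t-1$; these are exactly the representatives sent to $0$, which establishes the first bullet.

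It remains to prove uniform distribution of the other $t(q+1)-t = tq$ representatives over the $t$ orbits lying on $\langle 1, \al^{j-i}\rangle$, and this equidistribution count is the main obstacle. Each such representative is $\al^k \ell_\sigma$ with $\ell_\sigma \cdot \ell_\mu \neq 0$, and by Lemma~\ref{lem:one-dot} there are exactly $q$ admissible lines $\ell_\sigma$ (all but $\ell_\mu^\perp$). Its image lands in the orbit determined by the coset of $\al^i g(\al^k \ell_\sigma) = \al^{i+k}(\ell_\sigma \cdot \ell_\mu)$ in $\F_q^\times / H$. The key point, which rests on $\al^t$ generating $H$ so that $\F_q^\times / H \cong \Z/t\Z$ is cyclic generated by $\al H$, is that for fixed $\sigma$ the coset depends on $k$ only through $k \bmod t$; writing $\ell_\sigma \cdot \ell_\mu \in \al^{e_\sigma} H$, the exponent $i + k + e_\sigma$ sweeps every residue modulo $t$ exactly once as $k$ ranges over $\{0,1,\ldots,t-1\}$. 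Thus each admissible line contributes precisely one representative to each of the $t$ target orbits, and summing over the $q$ admissible lines yields exactly $q$ hits per orbit, completing the second part of the $\mu=\nu$ case.
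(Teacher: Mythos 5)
Your proof is correct and follows essentially the same route as the paper's: for $\mu \neq \nu$ you use that $F$ is an $H$-equivariant linear isomorphism and hence permutes the nonzero $H$-orbits, and for $\mu = \nu$ you use the rank-one factorization $F(x) = (x \cdot \ell_\mu)\,\al^i \langle 1, \al^{j-i} \rangle$ together with Lemma~\ref{lem:one-dot} to identify the $t$ orbits sent to $0$. The only difference is that you justify the ``$q$ hits per target orbit'' count explicitly (each of the $q$ admissible lines $\ell_\sigma$ contributes exactly one representative to each target orbit, via the cyclic structure of $\F_q^\times/H$), where the paper simply asserts the equidistribution; your extra detail is correct and fills in the one step the paper leaves implicit.
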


\begin{proof}
When $\mu \neq \nu$, the vectors $\al^i \ell_\mu, \al^j \ell_\nu$ are linearly independant, so $F$ is an isomorphism.  The elements of $\OO$ are representatives of the (nonzero) $H$-orbits, and the $H$-orbits partition $\F_q^2$.  An isomorphism maps the orbit partition back to itself.  So every $H$-orbit must be hit exactly once.

When $\mu=\nu$, the vectors $\al^i \ell_\mu, \al^j \ell_\nu$ are linearly dependant, so $F$ has rank $1$, with image equal to the $1$-dimensional subspace of $\F_q^2$ spanned by $\langle \al^i, \al^j \rangle$.  Indeed, $F(x) = (x \cdot \ell_\mu) \langle \al^i, \al^j \rangle = (x \cdot \ell_\mu) \al^i \langle 1, \al^{j-i} \rangle$.  As we saw in Lemma~\ref{lem:one-dot}, $t$ nonzero $H$-orbits map to zero (those contained in $\ell_\mu^\perp$).  The remaining $qt$ nonzero $H$-orbits cover, $q$ times each, the $t$ orbits contained in  $\langle \al^i, \al^j \rangle$.
\end{proof}

Here is the definition of the second type of structure, which is a slight generalization of \eqref{eqn:W-Matrix}, because $a(0)$ is not restricted to be equal to $0$.  For every $H$-invariant function $a \colon  \F_q \ra \C$, define a $t(q+1) \times t(q+1)$ matrix $A$, whose rows and columns are indexed by elements of $\OO$:
\begin{equation}  \label{eqn:StrMat2}
A_{\al^i \ell_\mu, \al^j \ell_\nu} = a(\al^i \ell_\mu \cdot \al^j \ell_\nu) .
\end{equation}
The indices satisfy $i,j = 0, 1, \ldots, t-1$ and $\mu, \nu \in \{\infty, 0, 1, \ldots, q-1\}$.  Note that $A$ is a symmetric matrix.  The matrix $A$ has a block form inherited from the blocks of $\OO$.

\begin{ex}  \label{ex:F5Amatrix}
In $\F_5$, let $H = \{ \pm 1\} \subseteq \F_5^\times$.  Take $\al = 2$.  Then $t=2$ and
\begin{align*}
\OO = \{ \langle0,1\rangle, &\langle0,2\rangle, \langle1,0\rangle, \langle2,0\rangle, 
 \langle1,2\rangle, \langle2,4\rangle, \\
 & \langle1,4\rangle, \langle2,3\rangle , \langle1,3\rangle, \langle2,1\rangle, \langle1,1\rangle, \langle2,2\rangle  \} .
\end{align*}
Let $a \colon  \F_5 \ra \C$ be $H$-invariant.  Then $a_4=a_1$, $a_3=a_2$, and 
\[  A=\left[\begin{array}{rr|rr|rr|rr|rr|rr}
a_{1} & a_{2} & a_{0} & a_{0} & a_{2} & a_{1} & a_{1} & a_{2} & a_{2} & a_{1} & a_{1} & a_{2} \\
a_{2} & a_{1} & a_{0} & a_{0} & a_{1} & a_{2} & a_{2} & a_{1} & a_{1} & a_{2} & a_{2} & a_{1} \\ \hline
a_{0} & a_{0} & a_{1} & a_{2} & a_{1} & a_{2} & a_{1} & a_{2} & a_{1} & a_{2} & a_{1} & a_{2} \\
a_{0} & a_{0} & a_{2} & a_{1} & a_{2} & a_{1} & a_{2} & a_{1} & a_{2} & a_{1} & a_{2} & a_{1} \\ \hline
a_{2} & a_{1} & a_{1} & a_{2} & a_{0} & a_{0} & a_{1} & a_{2} & a_{2} & a_{1} & a_{2} & a_{1} \\
a_{1} & a_{2} & a_{2} & a_{1} & a_{0} & a_{0} & a_{2} & a_{1} & a_{1} & a_{2} & a_{1} & a_{2} \\ \hline
a_{1} & a_{2} & a_{1} & a_{2} & a_{1} & a_{2} & a_{2} & a_{1} & a_{2} & a_{1} & a_{0} & a_{0} \\
a_{2} & a_{1} & a_{2} & a_{1} & a_{2} & a_{1} & a_{1} & a_{2} & a_{1} & a_{2} & a_{0} & a_{0} \\ \hline
a_{2} & a_{1} & a_{1} & a_{2} & a_{2} & a_{1} & a_{2} & a_{1} & a_{0} & a_{0} & a_{1} & a_{2} \\
a_{1} & a_{2} & a_{2} & a_{1} & a_{1} & a_{2} & a_{1} & a_{2} & a_{0} & a_{0} & a_{2} & a_{1} \\ \hline
a_{1} & a_{2} & a_{1} & a_{2} & a_{2} & a_{1} & a_{0} & a_{0} & a_{1} & a_{2} & a_{2} & a_{1} \\
a_{2} & a_{1} & a_{2} & a_{1} & a_{1} & a_{2} & a_{0} & a_{0} & a_{2} & a_{1} & a_{1} & a_{2}
\end{array}\right]. \]
\end{ex}

Example~\ref{ex:F5Amatrix} illustrates the following general result.

\begin{lem}  \label{lem:blockTypes}
For any $H$-invariant function $a \colon  \F_q \ra \C$, the blocks of the matrix $A$ are of two types:
\begin{itemize}
\item the block is $a(0)$ times the all-$1$ matrix;  or
\item the block is circulant: $a(1), a(\al), \ldots, a(\al^{t-1})$, or a cyclic shift thereof.
\end{itemize}
For a given block-row, there is one block of the first type and $q$ blocks of the second type.  For every row of $A$, there are $t$ entries equal to $a(0)$ and $q$ entries of each of $a(1), a(\al), \ldots, a(\al^{t-1})$.  The sum of each row's entries is $t a(0) + q \sum_{j=0}^{t-1} a(\al^j)$.
\end{lem}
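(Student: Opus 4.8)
The plan is to reduce everything to the single bilinearity identity $\al^i \ell_\mu \cdot \al^j \ell_\nu = \al^{i+j}(\ell_\mu \cdot \ell_\nu)$ and then read off the block structure from the value of $\ell_\mu \cdot \ell_\nu$. First I would fix a block-row index $\mu$ and a block-column index $\nu$, so that by \eqref{eqn:StrMat2} the corresponding $t \times t$ block has $(i,j)$-entry $a\bigl(\al^{i+j}(\ell_\mu \cdot \ell_\nu)\bigr)$. The whole argument then splits on whether $\ell_\mu \cdot \ell_\nu$ is zero or not.

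If $\ell_\mu \cdot \ell_\nu = 0$, every entry of the block equals $a(0)$, giving a block of the first type. By Lemma~\ref{lem:one-dot}, for each fixed $\mu$ there is exactly one $\nu$ (the one with $\ell_\nu = \ell_\mu^\perp$) for which this occurs, so each block-row has exactly one block of the first type and $q$ blocks of the second type. If instead $\ell_\mu \cdot \ell_\nu \neq 0$, write $\ell_\mu \cdot \ell_\nu = \al^s$, so the entry is $a(\al^{i+j+s})$. Since $a$ is $H$-invariant and $\al^t$ generates $H$, the value $a(\al^m)$ depends only on $m \bmod t$; hence the entry depends only on $i+j+s \bmod t$, which is precisely the circulant structure of \eqref{eqn:CirculantA} shifted by $s$. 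Its first row ($i=0$) is $a(\al^s), a(\al^{s+1}), \ldots, a(\al^{s+t-1})$, a cyclic shift of $\bigl(a(1), a(\al), \ldots, a(\al^{t-1})\bigr)$, establishing the second type.

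The row-counting then follows by combining the two cases. Fixing a row indexed by $(\mu, i)$, its intersection with the unique type-one block contributes $t$ entries equal to $a(0)$. In each of the $q$ type-two blocks, as $j$ runs over $0, 1, \ldots, t-1$ the exponent $i+j+s$ runs over a complete residue system modulo $t$, so the row meets each of $a(1), a(\al), \ldots, a(\al^{t-1})$ exactly once; summing over the $q$ type-two blocks gives $q$ copies of each. This yields the stated tally and the row sum $t a(0) + q \sum_{j=0}^{t-1} a(\al^j)$.

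I expect this to be essentially bookkeeping once the identity $\al^i \ell_\mu \cdot \al^j \ell_\nu = \al^{i+j}(\ell_\mu \cdot \ell_\nu)$ is in hand. The only points needing care are matching the period-$t$ reduction coming from $H$-invariance with the circulant indexing of \eqref{eqn:CirculantA}, and correctly invoking Lemma~\ref{lem:one-dot} to guarantee exactly one zero-block per block-row (including the self-orthogonal case $\nu=\mu$, which the lemma already subsumes).
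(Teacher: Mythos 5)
Your proposal is correct and follows essentially the same route as the paper: reduce the $(i,j)$-entry of the $(\mu,\nu)$-block to $a\bigl(\al^{i+j}(\ell_\mu\cdot\ell_\nu)\bigr)$, split on whether $\ell_\mu\cdot\ell_\nu$ vanishes, and invoke Lemma~\ref{lem:one-dot} for the uniqueness of the zero block in each block-row. Your explicit verification that each row of a type-two block runs over a complete residue system modulo $t$ is a detail the paper leaves implicit, but it is the intended bookkeeping.
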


\begin{proof}
The block determined by $\ell_\mu$ and $\ell_\nu$ has $i,j$-entry equal to $a(\al^i \ell_\mu \cdot \al^j \ell_\nu) = a(\al^{i+j} (\ell_\mu \cdot \ell_\nu))$.  If $\ell_\mu \cdot \ell_\nu = 0$, then all the entries for that block equal $a(0)$.  That is, the block is $a(0)$ times the all-$1$ matrix of size $t \times t$.

If $\ell_\mu \cdot \ell_\nu \neq 0$, then $\ell_\mu \cdot \ell_\nu = \al^k$ for some $k \in \{0, 1, \ldots, q-2 \}$.  Then the $i,j$-entry equals $a(\al^{i+j+k})$.  By the $H$-invariance of $a$, this yields a circulant matrix beginning with $a(\al^k)$.

For each $\ell_\mu$, there is a unique $\ell_\nu$ with $\ell_\mu \cdot \ell_\nu = 0$, by Lemma~\ref{lem:one-dot}. 
\end{proof}

\begin{cor}  \label{cor:rowSumsOfW}
Suppose $a(0)=0$.  Then the sum of the entries of any row or column of $A$ is $q \sum_{j=0}^{t-1} a(\al^j)$.  If $A$ is invertible, then the sum of the entries in any row or column of $A^{-1}$ is $1/(q \sum_{j=0}^{t-1} a(\al^j))$.
\end{cor}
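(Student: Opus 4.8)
The plan is to read the first assertion directly off Lemma~\ref{lem:blockTypes} and to deduce the second by recognizing the all-ones vector as an eigenvector of $A$. For the row-sum claim, Lemma~\ref{lem:blockTypes} already records that every row of $A$ sums to $t\,a(0) + q\sum_{j=0}^{t-1} a(\al^j)$; imposing $a(0)=0$ collapses this to $q\sum_{j=0}^{t-1} a(\al^j)$. Since $A$ is symmetric (noted just after \eqref{eqn:StrMat2}), the column sums equal the row sums, so the first statement requires essentially no further work.

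For the inverse, I would set $s = q\sum_{j=0}^{t-1} a(\al^j)$ and let $\mathbf{1}$ denote the all-ones column vector indexed by $\OO$. The constant-row-sum property is precisely the relation $A\mathbf{1} = s\mathbf{1}$, so $\mathbf{1}$ is an eigenvector of $A$ with eigenvalue $s$. First I would observe that invertibility forces $s \neq 0$: were $s=0$, then $A\mathbf{1}=0$ with $\mathbf{1}\neq 0$ would contradict the injectivity of $A$. Applying $A^{-1}$ to $A\mathbf{1}=s\mathbf{1}$ then yields $A^{-1}\mathbf{1} = (1/s)\mathbf{1}$, which is exactly the assertion that every row of $A^{-1}$ sums to $1/s$; the symmetry of $A^{-1}$, inherited from $A$, transfers this conclusion to the columns.

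There is no substantive obstacle here. The only point I would be careful to state explicitly is the nonvanishing of $s$, since the claimed value $1/s$ is otherwise meaningless; this is precisely what the invertibility hypothesis supplies through the eigenvector relation above, so I would foreground that deduction rather than leave it implicit.
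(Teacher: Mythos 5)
Your proposal is correct and follows essentially the same route as the paper: read the constant row sum from Lemma~\ref{lem:blockTypes}, interpret it as $A\mathbf{1} = s\mathbf{1}$, apply $A^{-1}$, and use symmetry for the columns. Your explicit remark that invertibility forces $s \neq 0$ is a small point the paper leaves implicit, but it is not a different argument.
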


\begin{proof}
The row sum comes from Lemma~\ref{lem:blockTypes} and $a(0)=0$.  This means the all-one vector $\mathbf{1}$ is an eigenvector for $A$ with eigenvalue $q \sum_{j=0}^{t-1} a(\al^j)$.  If $A$ is invertible, multiplying the eigenvector equation by $A^{-1}$ shows that $\mathbf{1}$ is an eigenvector for $A^{-1}$ with eigenvalue $1/(q \sum_{j=0}^{t-1} w(\al^j))$.  The column sums follow from $A$ being a symmetric matrix.
\end{proof}

Now consider a second $H$-invariant function $b \colon  \F_q \ra \C$ with associated matrix $B$ of size $t(q+1) \times t(q+1)$.  As above, $H$-invariance implies that every value of $b$ equals one of $b(0), b(1), b(\al), b(\al^2), \ldots, b(\al^{t-1})$.

For $H$-invariant functions $a, b \colon  \F_q \ra \C$, define $\breve{a}$ and the \emph{correlation} $c_m(a,b)$ with shift $m$, $m=0, 1, \ldots, t-1$, by
\begin{equation}  \label{eq:correlation}
\breve{a} = \sum_{j=0}^{t-1} a(\al^j), \quad c_m(a,b) = \sum_{j=0}^{t-1} a(\al^j) b(\al^m \al^j) .
\end{equation}

\begin{lem}
The correlations $c_m(a,b)$ of $H$-invariant functions satisfy two properties:
\begin{itemize}
\item  $c_{m+t}(a,b) = c_m(a,b)$, so the index $m$ makes sense modulo $t$;
\item  $c_{t-m}(a,b) = c_m(b,a)$; in particular, $c_{t-m}(a,a) = c_m(a,a)$.
\end{itemize}
\end{lem}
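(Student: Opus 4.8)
The plan is to prove both periodicity properties directly from the definition \eqref{eq:correlation} of $c_m(a,b)$, using only the fact that $\al^t$ generates $H$ (so $\al^t \in H$) together with the $H$-invariance of $a$ and $b$. The single underlying observation is that $H$-invariance of $b$ gives $b(\al^{s+t}) = b(\al^t \al^s) = b(\al^s)$ for every integer $s$, so the function $j \mapsto b(\al^j)$ is periodic with period $t$; the same holds for $a$. Consequently, in any sum of the form $\sum_j a(\al^?) b(\al^?)$ the exponents may be reduced modulo $t$ without changing the value.

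For the first property, I would substitute $m+t$ for $m$ in the definition and apply the periodicity of $b$ to each summand: $b(\al^{(m+t)+j}) = b(\al^{m+j})$. Since the factor $a(\al^j)$ is untouched, each term of $c_{m+t}(a,b)$ equals the corresponding term of $c_m(a,b)$, giving $c_{m+t}(a,b) = c_m(a,b)$ and justifying that the index $m$ is well defined modulo $t$.

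For the second property, I would first rewrite $b(\al^{(t-m)+j}) = b(\al^t \al^{j-m}) = b(\al^{j-m})$ using $H$-invariance, so that
\[ c_{t-m}(a,b) = \sum_{j=0}^{t-1} a(\al^j) b(\al^{j-m}) . \]
Then I would reindex by $k = j - m$. Because the summand is $t$-periodic in $j$, letting $j$ run over $\{0,1,\ldots,t-1\}$ gives the same value as letting $k$ run over any complete set of residues modulo $t$; choosing representatives $k \in \{0,1,\ldots,t-1\}$ and writing $a(\al^j) = a(\al^{k+m})$ turns the sum into $\sum_{k=0}^{t-1} b(\al^k) a(\al^{m+k}) = c_m(b,a)$. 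Setting $b = a$ then yields $c_{t-m}(a,a) = c_m(a,a)$ as the stated special case.

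The computation presents no genuine obstacle; the only point requiring care is the reindexing in the second property, where one must confirm that the $t$-periodicity of the summand—again a consequence of $\al^t \in H$ and $H$-invariance of both $a$ and $b$—is exactly what allows shifting the summation index by $m$ while keeping the range $\{0,1,\ldots,t-1\}$.
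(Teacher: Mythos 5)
Your proof is correct and follows the same route as the paper's (which simply cites $H$-invariance for the first property and ``reindexing the sums'' for the second); you have merely spelled out the details, correctly noting that $t$-periodicity of $j \mapsto a(\al^j)$ and $j \mapsto b(\al^j)$ is what justifies both reducing exponents modulo $t$ and shifting the summation range.
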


\begin{proof}
The first property is a consequence of $H$-invariance.  The second follows by reindexing the sums.
\end{proof}

We now discuss the product $AB$ in terms of blocks.
\begin{lem}  \label{lem:ABentries}
The product $AB$ has the following features:
\begin{itemize}
\item The diagonal blocks are correlation matrices of the form
\[  \begin{bmatrix}
t a_0 b_0 + q c_0 & t a_0 b_0 + q c_1 & \cdots & t a_0 b_0 +q c_{t-1} \\
t a_0 b_0 + q c_{t-1} & t a_0 b_0 + q c_0 & \cdots & t a_0 b_0 + q c_{t-2} \\
\vdots & \vdots & \ddots & \vdots \\
t a_0 b_0 + q c_1 & t a_0 b_0 + q c_2 & \cdots & t a_0 b_0 + q c_0
\end{bmatrix}, \]
where $c_m=c_m(a,b)$ throughout.  
\item  All the off-diagonal blocks are the same, namely, all-$1$ matrices multiplied by
$a_0  \breve{b} + \breve{a} b_0 + ((q-1)/t) \breve{a} \breve{b}$.
\end{itemize}
\end{lem}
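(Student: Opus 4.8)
The plan is to compute a single entry of $AB$ directly and then read off the block structure. Fix indices and write
\[
(AB)_{\al^i \ell_\mu, \al^k \ell_\sigma} = \sum_{\al^j \ell_\nu \in \OO} a(\al^i \ell_\mu \cdot \al^j \ell_\nu)\, b(\al^j \ell_\nu \cdot \al^k \ell_\sigma).
\]
Using the symmetry of the dot product, rewrite each factor with the summation variable $x = \al^j \ell_\nu$ on the left, so the summand becomes $a(x \cdot \al^i \ell_\mu)\, b(x \cdot \al^k \ell_\sigma)$. This is exactly $g(F(x))$, where $F(x) = \langle x \cdot \al^i \ell_\mu, x \cdot \al^k \ell_\sigma \rangle$ is the linear map of Lemma~\ref{lem:two-dot} (with its second pair of indices relabeled $k,\sigma$) and $g \colon \F_q^2 \ra \C$ is defined by $g(y_1, y_2) = a(y_1) b(y_2)$. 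The key preliminary observation is that $g$ is $H$-invariant, since $g(h y_1, h y_2) = a(h y_1) b(h y_2) = a(y_1) b(y_2)$ for $h \in H$; hence $g$ is constant on $H$-orbits, and any sum of $g$ over one representative per orbit is well-defined.

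First I would handle the off-diagonal blocks, where $\mu \neq \sigma$. Then $\al^i \ell_\mu$ and $\al^k \ell_\sigma$ are linearly independent, so by Lemma~\ref{lem:two-dot} the map $F$ permutes the nonzero $H$-orbits. Because $g$ is $H$-invariant, $\sum_{x \in \OO} g(F(x)) = \sum_{x \in \OO} g(x)$, a quantity independent of $i, k, \mu, \sigma$; this already shows that every off-diagonal block is the same constant multiple of the all-$1$ matrix. It then remains to evaluate $\sum_{x \in \OO} g(x)$ by splitting $\OO$ into its blocks: the $\ell_\infty$ block contributes $\sum_i a(0) b(\al^i) = a_0 \breve{b}$, the $\ell_0$ block contributes $\breve{a} b_0$, and the blocks $\ell_m$, $m=1,\ldots,q-1$, contribute $\sum_{m=1}^{q-1} \sum_i a(\al^i) b(\al^{i+m})$. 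Grouping the index $m$ by its residue mod $t$ (each residue occurring $(q-1)/t$ times) turns the last sum into $((q-1)/t)\sum_{r=0}^{t-1} c_r(a,b)$, and the identity $\sum_{r=0}^{t-1} c_r(a,b) = \breve{a}\breve{b}$ (from swapping the order of summation in \eqref{eq:correlation}) yields the claimed value $a_0 \breve{b} + \breve{a} b_0 + ((q-1)/t)\breve{a}\breve{b}$.

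Next I would treat the diagonal blocks, where $\mu = \sigma$, so $\al^i \ell_\mu$ and $\al^k \ell_\mu$ are linearly dependent and $F$ has rank $1$ with image the line $\langle 1, \al^{k-i}\rangle$. By Lemma~\ref{lem:two-dot}, the $t$ orbits inside $\ell_\mu^\perp$ map to $0$, contributing $t\, g(0) = t a_0 b_0$, while the remaining orbits cover the $t$ orbits of the image line $q$ times each. Representatives of those image orbits are $\al^l \langle 1, \al^{k-i}\rangle = \langle \al^l, \al^{l+k-i}\rangle$, so their $g$-sum is $\sum_{l=0}^{t-1} a(\al^l) b(\al^{l+k-i}) = c_{k-i}(a,b)$. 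Hence the $(i,k)$-entry of the diagonal block equals $t a_0 b_0 + q\, c_{k-i}(a,b)$, which is exactly the circulant correlation matrix displayed in the statement (with index $k-i$ read mod $t$).

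The argument is, in essence, careful bookkeeping of Lemma~\ref{lem:two-dot}, so the main thing to watch is the treatment of the constant terms $a(0)$ and $b(0)$, which need not vanish in this generalized setting: they produce the diagonal summand $t a_0 b_0$ (from the orbits sent to $0$) and the off-diagonal summands $a_0 \breve{b} + \breve{a} b_0$ (from the $\ell_\infty$ and $\ell_0$ blocks). Identifying exactly which orbits land on the zero vector, and confirming via Lemma~\ref{lem:two-dot} that each nonzero orbit of the rank-$1$ image is hit precisely $q$ times, is where I expect the reasoning to require the most care; the rest reduces to the direct evaluation of $H$-invariant sums.
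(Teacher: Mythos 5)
Your proof is correct and follows essentially the same route as the paper's: both cases rest on the orbit-counting in Lemma~\ref{lem:two-dot}, with the off-diagonal entries evaluated by summing the $H$-invariant product $a(y_1)b(y_2)$ once over each nonzero orbit (split by the blocks of $\OO$, exactly as the paper's parametrization by first coordinate) and the diagonal entries obtained from the rank-one image covered $q$ times plus the $t$ orbits sent to $0$. The only cosmetic difference is that you package the computation as pushing an $H$-invariant function $g$ forward along $F$ and use $\sum_{r=0}^{t-1}c_r(a,b)=\breve a\breve b$ where the paper reindexes the double sum directly.
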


\begin{proof}
Begin with the off-diagonal block case, at position indexed by $\al^i \ell_\mu, \al^j \ell_\nu$, with $\mu \neq \nu$.  The corresponding entry $E_o$ of $AB$ is
\[  E_o=\sum_{\al^k \ell_{\sigma} \in \OO} a(\al^{i+k} \ell_\mu \cdot \ell_{\sigma}) \, b(\al^{k+j} \ell_{\sigma} \cdot \ell_\nu) . \]
As we saw in Lemma~\ref{lem:two-dot}, every $H$-orbit is hit once.  Using the $H$-invariance of $a$ and $b$, we have
\begin{align*}
E_o&= \sum_{k=0}^{t-1} a(0) b(\al^k) + \sum_{k=0}^{t-1} \sum_{r \in \F_q} a(\al^k) b(\al^k r) \\
&= \sum_{k=0}^{t-1} a(0) b(\al^k) + \sum_{k=0}^{t-1} a(\al^k) b(0) +\sum_{k=0}^{t-1} \sum_{m=0}^{q-2} a(\al^k) b(\al^{k + m}) \\
&= \sum_{k=0}^{t-1} a(0) b(\al^k) + \sum_{k=0}^{t-1} a(\al^k) b(0) + \frac{q-1}{t} \sum_{k=0}^{t-1} \sum_{m=0}^{t-1} a(\al^k) b(\al^{k + m}) \\
&= \sum_{k=0}^{t-1} a(0) b(\al^k) + \sum_{k=0}^{t-1} a(\al^k) b(0) +\frac{q-1}{t} \sum_{k=0}^{t-1}  a(\al^k) \sum_{m'=0}^{t-1} b(\al^{m'}) .
\end{align*}
In the last step, we re-indexed via $m'=k + m$.

Now look at a diagonal block, so that $\mu=\nu$.  The entry $E_d$ at position indexed by $\al^i \ell_\mu, \al^j \ell_\mu$ is 
\[ E_d = \sum_{\al^k \ell_{\sigma} \in \OO} a(\al^{i + k} \ell_\mu \cdot \ell_{\sigma}) \, b(\al^{k + j} \ell_{\sigma} \cdot \ell_\mu) . \]
Using Lemma~\ref{lem:one-dot}, 
there is one $\ell_\tau$ such that $\ell_\mu \cdot \ell_\tau=0$.  For the remaining $q$ vectors $\ell_{\sigma}$, $\ell_\mu \cdot \ell_{\sigma} \neq 0$, so $\ell_\mu \cdot \ell_{\sigma} = \al^{m_{\sigma}}$, for some $m_{\sigma} = 0, 1, \ldots, q-2$.  Isolate the contribution to $E_d$ of the orbits contained in one $\ell_{\sigma}$:
\begin{align*}
\sum_{k=0}^{t-1} a(\al^{i + k} &\ell_\mu \cdot \ell_{\sigma}) \, b(\al^{k + j} \ell_{\sigma} \cdot \ell_\mu) \\
&= \begin{cases}
t a(0) b(0), & \ell_\mu \cdot \ell_\sigma = 0, \\
\sum_{k=0}^{t-1} a(\al^{i + k} \al^{m_{\sigma}}) \, b(\al^{k + j} \al^{m_{\sigma}}) , & \ell_\mu \cdot \ell_{\sigma} \neq 0,
\end{cases} \\
&= \begin{cases}
t a(0) b(0), & \ell_\mu \cdot \ell_{\sigma} = 0, \\
\sum_{k=0}^{t-1} a(\al^{i + k}) \, b(\al^{k + j}) , & \ell_\mu \cdot \ell_{\sigma} \neq 0,
\end{cases}
\end{align*}
using $H$-invariance and reindexing the summation.
Adding the contributions of all $\ell_{\sigma}$ yields
\begin{align*}
E_d &= t a(0) b(0) + q \sum_{k=0}^{t-1} a(\al^{i + k}) \, b(\al^{k + j}) \\
&=  t a(0) b(0) + q c_{j - i}(a,b) .  \qedhere
\end{align*}
\end{proof}

\begin{cor}  \label{cor:solveEquations}
Suppose $a \colon  \F_q \ra \C$ is $H$-invariant and that $a(0)=0$.  Assume the matrix $\mathcal{A}$ of \eqref{eqn:CirculantA} is invertible.  Then the matrix $A$ of \eqref{eqn:StrMat2} is also invertible, and the inverse matrix $A^{-1}$ has the form of \eqref{eqn:StrMat2}.
\end{cor}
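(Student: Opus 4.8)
The plan is to produce the inverse explicitly: I will look for an $H$-invariant function $b \colon \F_q \ra \C$ whose associated matrix $B$ of \eqref{eqn:StrMat2} satisfies $AB = I$. Since $A$ and $B$ are square of the same size, this single one-sided identity forces $A$ to be invertible with $A^{-1} = B$, and $B$ has the form \eqref{eqn:StrMat2} by construction. The entire product $AB$ has already been computed, block by block, in Lemma~\ref{lem:ABentries}, so the task reduces to solving for the values $b(0), b(1), b(\al), \ldots, b(\al^{t-1})$ that turn its block description into the identity matrix.

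First I would extract a consequence of the hypothesis that $\mathcal{A}$ is invertible. The all-ones vector $\mathbf{1} \in \C^t$ satisfies $(\mathcal{A}\mathbf{1})_i = \sum_{j=0}^{t-1} a(\al^{i+j}) = \breve{a}$, so $\mathbf{1}$ is an eigenvector of $\mathcal{A}$ with eigenvalue $\breve{a}$; invertibility of $\mathcal{A}$ therefore forces $\breve{a} \neq 0$. (This is consistent with Corollary~\ref{cor:rowSumsOfW}, which needs $\breve{a} \neq 0$ for $A^{-1}$ to exist.) Next, I would set $a_0 = a(0) = 0$ in Lemma~\ref{lem:ABentries}. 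The diagonal blocks of $AB$ then become the circulant matrices with $(i,j)$-entry $q\,c_{j-i}(a,b)$, and every off-diagonal block is the all-$1$ matrix scaled by $\breve{a}\bigl(b_0 + ((q-1)/t)\breve{b}\bigr)$. Matching this against the block decomposition of $I_{t(q+1)}$ (diagonal blocks $I_t$, off-diagonal blocks $0$) yields exactly two requirements: the diagonal condition $q\,c_m(a,b) = \delta_{m,0}$ for $m = 0, 1, \ldots, t-1$, and the off-diagonal condition $\breve{a}\bigl(b_0 + ((q-1)/t)\breve{b}\bigr) = 0$.

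The diagonal condition is a linear system in the unknowns $b(\al^0), \ldots, b(\al^{t-1})$. Reindexing by $k = m+j$ gives $c_m(a,b) = \sum_{k} a(\al^{k-m}) b(\al^k)$, so the coefficient matrix is the circulant $M$ with $M_{m,k} = a(\al^{k-m})$. This $M$ differs from $\mathcal{A}$ only by the index-reversal permutation, namely $M = R\mathcal{A}$ with $R_{i,j} = \delta_{j,-i}$, and is therefore invertible precisely when $\mathcal{A}$ is. Hence the diagonal system has a unique solution $(b(\al^j))_j$, which in turn fixes the value $\breve{b}$. Note that $b_0$ does not appear in the diagonal condition (because $a_0 = 0$), so it is pinned down entirely by the off-diagonal condition: since $\breve{a} \neq 0$, I set $b_0 = -((q-1)/t)\breve{b}$. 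With these choices, Lemma~\ref{lem:ABentries} gives $AB = I$, which completes the proof.

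The only real obstacle is the solvability of the correlation system for the nonzero-orbit values of $b$; the key observation is that this system is governed by the same matrix $\mathcal{A}$ up to the harmless reversal permutation $R$, so its invertibility is exactly the stated hypothesis. Everything else—deducing $\breve{a} \neq 0$ and solving the single scalar off-diagonal equation for $b_0$—is immediate once Lemma~\ref{lem:ABentries} and the eigenvector observation are in hand.
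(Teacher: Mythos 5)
Your proposal is correct and follows essentially the same route as the paper: both reduce the problem, via Lemma~\ref{lem:ABentries} with $a(0)=0$, to a $t\times t$ linear system in the nonzero-orbit values of $b$ governed by $\mathcal{A}$ (the paper writes the system as $c_m(b,a)=\delta_{m,0}/q$ so that the coefficient matrix is literally $\mathcal{A}$, while you use $c_m(a,b)$ and absorb the harmless reversal permutation), then set $b(0)=-((q-1)/t)\breve{b}$ to kill the off-diagonal blocks and conclude from the one-sided identity for square matrices.
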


\begin{proof}
Here is a system of linear equations for an unknown $H$-invariant function $b$:
\begin{align*}
c_0(b,a) &= 1/q, \\
c_1(b,a) &=0, \\
& \vdots \\
c_{t-1}(b,a) &= 0, \\
b(0) &= -((q-1)/t) \left( b(1) +\cdots + b(\al^{t-1}) \right) .
\end{align*}
The coefficient matrix for the first $t$ equations is exactly the matrix $\mathcal{A}$, which is invertible by hypothesis. Then the first $t$ equations have a unique solution $b(1), b(\al), \ldots, b(\al^{t-1})$, and the last equation gives $b(0)$.   By Lemma~\ref{lem:ABentries}, $BA=I_{t (q+1)}$,  so $B = A^{-1}$.
\end{proof}

\begin{rem}
The definitions of structured matrices can be extended beyond \eqref{eqn:StrMat2}, where $H$-orbits in $\F_q^2$ were used to define \eqref{eqn:OOReps}.  There are similar results using $H$-orbits in $\F_q^k$ for any integer $k \geq 2$.
\end{rem}

\section{Constructing two-weight codes}
In this section we describe the construction of several two-weight codes, all of which have the same weight distribution.  This is accomplished by solving the matrix equation $W \eta = \omega$ using different choices of $\omega$.

Assume $w$ is a nondegenerate weight on $\F_q$, with $w(0)=0$.  This means that the matrix $\mathcal{W}$ of \eqref{eqn:CirculantA} is invertible.  As usual, we will assume that $w$ takes positive integer values, except for $w(0)=0$.  Write $H = \sym(w) \subseteq \F_q^\times$, and let $t$ be the smallest positive integer such that $\al^t$ generates $H$, where $\al$ is a primitive element of $\F_q$.  Assume that $H \neq \F_q^\times$, so that $t\size{H} = (q-1)$, with $t \geq 2$.  By $H$-invariance, the values $w(1), w(\al), w(\al^2), \ldots, w(\al^{t-1})$ completely determine $w$.  

The linear codes to be constructed will have dimension $2$.   The set of nonzero $H$-orbits in $\F_q^2$ is represented by $\OO$ of \eqref{eqn:OOReps}.   Write the number of such orbits as $\tau=\size{\OO} = t(q+1)$.  The matrix $W$ of  \eqref{eqn:W-Matrix} is invertible by Corollary~\ref{cor:solveEquations}.  For vectors $\eta, \omega \in \Q^\tau$, we will examine carefully solutions of the equation $W \eta = \omega$ when $\omega$ assumes only two positive values.  We are interested in understanding when there are solutions of $W \eta = \omega$ with $\eta$ having all values nonnegative.

View $\Q^\tau$ as $\Q^\OO$, the space of functions from $\OO$ to $\Q$, and write $\la \in \OO$ for a typical element of $\OO$.  
Let $\varrho \in \Q$ be positive, $\varrho \neq 1$, and choose two subsets $S, S' \subseteq \OO$ of the same size: $s=\size{S} = \size{S'}$.  Define elements 
$\omega, \omega' \in \Q^\OO$ by
\begin{equation}  \label{eqn:defineOmegas}
 \omega(\la) = \begin{cases}
\varrho, & \la \in S, \\
1, & \text{otherwise} ;
\end{cases}  \quad
\omega'(\la) = \begin{cases}
\varrho, & \la \in S', \\
1, & \text{otherwise} .
\end{cases} 
\end{equation}

\begin{lem}  \label{lem:sameLength}
Let $\omega, \omega'$ be as in \eqref{eqn:defineOmegas}, with $s=\size{S} = \size{S'}$.
The solutions $\eta, \eta'$ of the equations $W \eta = \omega$, $W \eta' = \omega'$ satisfy $\sum_{\la \in \OO} \eta(\la) = \sum_{\la \in \OO} \eta'(\la)$.
\end{lem}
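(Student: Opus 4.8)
The plan is to exploit the fact that the all-one vector $\mathbf{1}$ is a left and right eigenvector of $W$, as established in Corollary~\ref{cor:rowSumsOfW}. The quantity I want to control, $\sum_{\la \in \OO} \eta(\la)$, is exactly the dot product $\mathbf{1}^T \eta$, and the row sums of $W$ being constant is precisely the statement that controls this inner product when we apply $W$. So the idea is to express $\sum \eta(\la)$ in terms of $\omega$ alone, using the eigenvector relation, and then observe that $\omega$ and $\omega'$ have the same total sum because they differ only by swapping which size-$s$ subset receives the value $\varrho$.

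First I would recall from Corollary~\ref{cor:rowSumsOfW} (applied with $a = w$, so that $a(0) = w(0) = 0$) that every row of $W$ sums to $q \breve{w} = q\sum_{j=0}^{t-1} w(\al^j)$, i.e.\ $W^T \mathbf{1} = q\breve{w}\, \mathbf{1}$; since $W$ is symmetric this is the same as $W\mathbf{1} = q\breve{w}\,\mathbf{1}$. Then for the solution $\eta$ of $W\eta = \omega$ I compute
\[
  q\breve{w} \sum_{\la \in \OO} \eta(\la)
  = (q\breve{w}\,\mathbf{1})^T \eta
  = (W\mathbf{1})^T \eta
  = \mathbf{1}^T W \eta
  = \mathbf{1}^T \omega
  = \sum_{\la \in \OO} \omega(\la),
\]
using the symmetry of $W$ in the middle equality. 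The identical computation for $\eta'$ gives $q\breve{w}\sum_{\la} \eta'(\la) = \sum_{\la} \omega'(\la)$.

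It remains only to check that $\sum_{\la \in \OO} \omega(\la) = \sum_{\la \in \OO} \omega'(\la)$. From the definition \eqref{eqn:defineOmegas}, $\omega$ takes the value $\varrho$ on the $s$ elements of $S$ and the value $1$ on the remaining $\tau - s$ elements, so $\sum_{\la} \omega(\la) = s\varrho + (\tau - s)$; the same formula holds for $\omega'$ because $\size{S'} = s$ as well. Hence the two column-sums of $\omega$, $\omega'$ agree, and dividing by the nonzero constant $q\breve{w}$ yields $\sum_{\la} \eta(\la) = \sum_{\la} \eta'(\la)$, as claimed. (The constant $q\breve{w}$ is nonzero because $w$ has positive values on nonzero elements, so $\breve{w} > 0$.)

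There is no real obstacle here: the entire argument is the observation that the common row sum of $W$ turns $\mathbf{1}$ into a shared eigenvector of $W$ and $W^{-1}$, which converts the sum of the entries of $\eta$ into the sum of the entries of $\omega$ up to a fixed scalar. The only point requiring the slightest care is invoking the symmetry of $W$ to pass from the row-sum statement to the needed relation $\mathbf{1}^T W = q\breve{w}\,\mathbf{1}^T$; this is immediate since $W$ is symmetric by the remark following \eqref{eqn:StrMat2}. The essential input is that $\omega$ and $\omega'$ are built to have the same multiset of prescribed values (the same number $s$ of $\varrho$'s), which is exactly the hypothesis $\size{S} = \size{S'}$.
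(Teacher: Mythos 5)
Your proof is correct and follows essentially the same route as the paper: left-multiply $W\eta=\omega$ by the all-one vector, use Corollary~\ref{cor:rowSumsOfW} to identify $\mathbf{1}$ as a left eigenvector of $W$ with eigenvalue $q\breve{w}\neq 0$, and observe that $\mathbf{1}^T\omega = s\varrho + \tau - s = \mathbf{1}^T\omega'$. The extra care you take with the symmetry of $W$ is fine but not needed, since the corollary already gives equal row and column sums.
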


\begin{proof}
Left multiply $\omega = W \eta$ by the all-$1$ row vector $\mathbf{1}$.  By Corollary~\ref{cor:rowSumsOfW}, $\mathbf{1}$ is a left eigenvector of $W$, so that 
\begin{align*}
\mathbf{1} \omega &=  \mathbf{1} W \eta ,\\
s \varrho + \tau - s &= q (\sum_{j=0}^{t-1} w(\al^j)) \mathbf{1} \eta = q (\sum_{j=0}^{t-1} w(\al^j)) \sum_{\la \in \OO} \eta(\la) .
\end{align*}
Doing the same for $\omega' = W \eta'$ yields the result.
\end{proof}

The following are the key technical results of the paper.  The reader may find it useful to refer to Example~\ref{ex:F5} while reading the proofs.

\begin{lem}  \label{lem:NonNegEtas}
Suppose $w$ is a nondegenerate rational-valued weight on $\F_q$, with $w(0)=0$, $w(r) >0$ for $r \neq 0$, and $H = \sym(w) \neq \F_q^\times$. 

Fix an integer $s$ satisfying $0 < s < \tau = \size{\OO}$.  Then there exist rational numbers $\delta_{\min}^{(s)}, \delta_{\max}^{(s)}$, with $0 < \delta_{\min}^{(s)} < 1 < \delta_{\max}^{(s)}$, such that:
for any subset $S \subset \OO$, with $\size{S} =s$, and any $\varrho$ with $\delta_{\min}^{(s)} \leq \varrho \leq \delta_{\max}^{(s)}$, the solution $\eta$ of the equation $W \eta = \omega$, with $\omega$ as in \eqref{eqn:defineOmegas}, has $\eta(\la) \geq 0$ for all $\la \in \OO$.  In particular, the numbers $\delta_{\min}^{(s)}, \delta_{\max}^{(s)}$ do not depend on the choice of subset $S$ of size $s$.
\end{lem}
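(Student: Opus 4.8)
The plan is to read $\eta = W^{-1}\omega$ as an affine function of $\varrho$, observe that at $\varrho = 1$ this function takes a constant strictly positive value, and conclude that nonnegativity persists on a whole interval about $\varrho = 1$; the $S$-independence is then a matter of intersecting the finitely many intervals obtained as $S$ ranges over subsets of size $s$.

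First I would record that $W$ is invertible by Corollary~\ref{cor:solveEquations} (the weight $w$ is nondegenerate and $w(0)=0$), so $\eta = W^{-1}\omega$ is uniquely determined. Writing $\mathbf{1}$ for the all-one vector and $\chi_S \in \Q^\OO$ for the indicator function of $S$, the definition \eqref{eqn:defineOmegas} reads $\omega = \mathbf{1} + (\varrho-1)\chi_S$. By Corollary~\ref{cor:rowSumsOfW} the vector $\mathbf{1}$ is an eigenvector of $W$ with eigenvalue $q\breve{w}$, so $W^{-1}\mathbf{1} = (q\breve{w})^{-1}\mathbf{1}$; note $\breve{w}>0$ since $w>0$ on nonzero elements. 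Setting $v_S = W^{-1}\chi_S \in \Q^\OO$, I obtain the explicit formula
\[ \eta(\la) = \frac{1}{q\breve{w}} + (\varrho-1)\,v_S(\la), \qquad \la \in \OO, \]
which is affine in $\varrho$ and, at $\varrho = 1$, equals the strictly positive constant $1/(q\breve{w})$ in every coordinate.

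Next, for a fixed $S$, I would dispose of the condition $\eta(\la)\ge 0$ coordinate by coordinate. It is vacuous when $v_S(\la)=0$; it is equivalent to the lower bound $\varrho \ge 1 - (q\breve{w}\,v_S(\la))^{-1}$, which lies strictly below $1$, when $v_S(\la)>0$; and it is equivalent to the upper bound $\varrho \le 1 + (q\breve{w}\,|v_S(\la)|)^{-1}$, which lies strictly above $1$, when $v_S(\la)<0$. Intersecting these finitely many half-lines shows that the set of $\varrho$ for which $\eta\ge0$ is a closed interval $[\varrho_{\min}(S),\varrho_{\max}(S)]$ (allowing $\varrho_{\max}(S)=+\infty$) whose interior contains $1$; its endpoints are rational because $W^{-1}$ and $\breve{w}$ are rational.

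Finally, since there are only $\binom{\tau}{s}$ subsets $S$ of size $s$, I would take the intersection over all of them by setting $\delta_{\min}^{(s)} = \max\bigl(\tfrac12,\ \max_{|S|=s}\varrho_{\min}(S)\bigr)$ and $\delta_{\max}^{(s)} = \min_{|S|=s}\varrho_{\max}(S)$ (replacing the latter by, say, $2$ in the degenerate event that every $\varrho_{\max}(S)=+\infty$). Because each $\varrho_{\min}(S)<1<\varrho_{\max}(S)$ and the collection is finite, these quantities satisfy $0<\delta_{\min}^{(s)}<1<\delta_{\max}^{(s)}$, and by construction $[\delta_{\min}^{(s)},\delta_{\max}^{(s)}]\subseteq[\varrho_{\min}(S),\varrho_{\max}(S)]$ for every admissible $S$; hence $\eta(\la)\ge0$ for all $\la$ whenever $\size{S}=s$ and $\delta_{\min}^{(s)}\le\varrho\le\delta_{\max}^{(s)}$. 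I expect essentially no obstacle here: the genuine content is the eigenvector identity $W^{-1}\mathbf{1}=(q\breve{w})^{-1}\mathbf{1}$, which forces the base point at $\varrho=1$ to be strictly positive and uniform across coordinates, after which both the interval structure and the $S$-independence reduce to an elementary continuity-and-finiteness argument.
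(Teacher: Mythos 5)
Your proof is correct and follows essentially the same route as the paper: both express $\eta = W^{-1}\omega$ as an affine function of $\varrho$, use Corollary~\ref{cor:rowSumsOfW} to see that the value at $\varrho=1$ is the positive constant $1/(q\breve{w})$ in every coordinate (the paper phrases this as $P(\la)+Q(\la)>0$ for its decomposition $\eta(\la)=P(\la)+Q(\la)\varrho$, which is your formula recentered at $\varrho=0$), and then intersect the finitely many per-coordinate and per-subset constraints. Your explicit handling of the degenerate case where no constraint of a given type arises is a minor tidying of a point the paper leaves implicit.
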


\begin{proof}
Take any subset $S \subset \OO$ with $\size{S}=s$, and consider $\omega$ from \eqref{eqn:defineOmegas} with an unknown value of $\varrho$.  Because $w$ is nondegenerate, $W$ is invertible.   Thus $\eta = W^{-1} \omega$ exists.  This implies that $\eta(\la) = P(\la) + Q(\la) \varrho$, where $Q(\la)$ is the sum of the entries of $W^{-1}$ in row $\la$ and columns indexed by elements of $S$, and $P(\la)$ is the sum of the remaining entries in row $\la$.  By Corollary~\ref{cor:rowSumsOfW}, $P(\la) + Q(\la) >0$, for all $\la \in \OO$.  This implies, for each $\la$, that at most one of $P(\la), Q(\la)$ can be negative.  

If $P(\la) <0$, then $Q(\la) > \abs{P(\la)} > 0$, and $\eta(\la) \geq 0$ if and only if $\varrho \geq \abs{P(\la)}/Q(\la)$.  Observe that $0 < \abs{P(\la)}/Q(\la) < 1$.  Similarly, if $Q(\la) < 0$, then $P(\la) > \abs{Q(\la)} >0$, and $\eta(\la) \geq 0$ if and only if $\varrho \leq P(\la)/\abs{Q(\la)}$.  Note that $P(\la)/\abs{Q(\la)} > 1$.  There is no extra restriction on $\varrho$ when both $P(\la)$ and $Q(\la)$ are nonnegative.

Define $\delta_{\min}^S, \delta_{\max}^S$ as follows:
\begin{align*}
\delta_{\min}^S &= \max\{ \abs{P(\la)}/Q(\la) \colon  P(\la) < 0\}  , \\
\delta_{\max}^S &= \min\{ P(\la)/\abs{Q(\la)} \colon  Q(\la) < 0\}  .
\end{align*}
Because we are finding maximums amd minimums of finite sets of positive rational numbers, we see that $0 < \delta_{\min}^S < 1 < \delta_{\max}^S$.  If $\varrho$ satisfies $\delta_{\min}^S \leq \varrho \leq \delta_{\max}^S$, then $\varrho$ satisfies all the inequalities in the previous paragraph, and every $\eta(\la)$ is nonnegative.

Now optimize over subsets $S \subset \OO$ with $\size{S}=s$.  Define:
\begin{align*}
\delta_{\min}^{(s)} &= \max\{\delta_{\min}^S \colon  S \subset \OO, \size{S}=s\}  , \\
\delta_{\max}^{(s)} &= \min\{ \delta_{\max}^S \colon  S \subset \OO, \size{S}=s\}  .
\end{align*}
As there are only finitely many subsets of $\OO$ of size $s$, we still have $0 < \delta_{\min}^{(s)} < 1 < \delta_{\max}^{(s)}$.  If $\varrho$ satisfies $\delta_{\min}^{(s)} \leq \varrho \leq \delta_{\max}^{(s)}$, then $\delta_{\min}^S \leq \varrho \leq \delta_{\max}^S$ for all $S$ with $\size{S}=s$, so that $\eta(\la) \geq 0$, $\la \in \OO$, for any choice of $S$ with $\size{S} = s$.
\end{proof}

Define $\mathfrak{q}(\eta) = \sum_{\la \in \OO} \eta(\la)(\eta(\la)-1)$, $\eta \in \Q^\OO$.

\begin{thm}  \label{thm:differentQs}
Suppose $w$ is a nondegenerate rational-valued weight on $\F_q$, with $w(0)=0$, $w(r) >0$ for $r \neq 0$, and $H = \sym(w) \neq \F_q^\times$.  
Assume the values $((q-1)/t) \breve{w}^2, q c_m(w,w)$, $m=1, 2, \ldots, \lfloor t/2 \rfloor$, of \eqref{eq:correlation}, are not all equal.
Then there exist two subsets $S, S' \subseteq \OO$, with $\size{S} = \size{S'} =2$, such that, for all but at most two values of $\varrho$ satisfying $\delta_{\min}^{(2)} \leq \varrho \leq \delta_{\max}^{(2)}$ and for $\omega, \omega'$ as in \eqref{eqn:defineOmegas}, the solutions $\eta, \eta'$ of the equations $W \eta = \omega, W \eta' = \omega'$ have $\eta(\la) \geq 0, \eta'(\la) \geq 0$ for all $\la \in \OO$, as well as $\mathfrak{q}(\eta) \neq \mathfrak{q}(\eta')$.
\end{thm}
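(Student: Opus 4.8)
The plan is to treat $\mathfrak{q}(\eta)$ as a polynomial in $\varrho$ and reduce the entire statement to the non-vanishing of a single leading coefficient. Writing $\mathbf{1}$ for the all-ones vector and $\mathbf{1}_S$ for the indicator of $S$, the right-hand side of \eqref{eqn:defineOmegas} is $\omega = \mathbf{1} + (\varrho-1)\mathbf{1}_S$, so $\eta = W^{-1}\omega = W^{-1}\mathbf{1} + (\varrho-1)W^{-1}\mathbf{1}_S$. Thus each coordinate $\eta(\la)$ is an affine function of $\varrho$ with $\varrho$-coefficient $g_S(\la) := (W^{-1}\mathbf{1}_S)(\la) = \sum_{\mu\in S}(W^{-1})_{\la\mu}$. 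Consequently $\mathfrak{q}(\eta) = \sum_\la\eta(\la)^2 - \sum_\la\eta(\la)$ is a polynomial in $\varrho$ of degree at most $2$, and likewise for $\mathfrak{q}(\eta')$. Hence $\mathfrak{q}(\eta) - \mathfrak{q}(\eta')$ is a polynomial of degree at most $2$: once I exhibit $S, S'$ for which it is not identically zero, it has at most two roots, which is exactly the claimed ``all but at most two values of $\varrho$.'' Nonnegativity of $\eta,\eta'$ on the entire interval $[\delta_{\min}^{(2)},\delta_{\max}^{(2)}]$ is already guaranteed by Lemma~\ref{lem:NonNegEtas} applied with $s=2$.

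The coefficient of $\varrho^2$ in $\mathfrak{q}(\eta)$ is $\sum_\la g_S(\la)^2$, and since $W$ (hence $M := W^{-1}$) is symmetric,
\[ \sum_\la g_S(\la)^2 = \sum_{\mu,\mu'\in S}\sum_\la M_{\la\mu}M_{\la\mu'} = \mathbf{1}_S^\top M^2\,\mathbf{1}_S . \]
So the leading coefficient of $\mathfrak{q}(\eta) - \mathfrak{q}(\eta')$ is $\mathbf{1}_S^\top M^2\mathbf{1}_S - \mathbf{1}_{S'}^\top M^2\mathbf{1}_{S'}$, and the task becomes choosing two $2$-element subsets $S, S'$ that make this nonzero. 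For $S=\{\la,\la'\}$ this quadratic form equals $(M^2)_{\la\la} + (M^2)_{\la'\la'} + 2(M^2)_{\la\la'}$. By Lemma~\ref{lem:blockTypes} every row of the structured matrix $M = W^{-1}$ (which is of the form \eqref{eqn:StrMat2} by Corollary~\ref{cor:solveEquations}) has the same multiset of entries, so the diagonal entries $(M^2)_{\la\la} = \sum_\mu M_{\la\mu}^2$ are all equal to a common value $D$. Therefore the leading coefficient reduces to $2\big((M^2)_{\la\la'} - (M^2)_{\sigma\sigma'}\big)$ for $S=\{\la,\la'\}$, $S'=\{\sigma,\sigma'\}$, and it suffices to produce two off-diagonal entries of $M^2$ that differ.

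It remains to show that the off-diagonal entries of $M^2 = (W^2)^{-1}$ are not all equal, which I argue by contraposition. If they were all equal, then, together with the constant diagonal $D$, the matrix $M^2$ would have the form \eqref{eqn:simpleStructured}; by Lemma~\ref{lem:preserveInverseType1} its inverse $W^2$ would then also have the form \eqref{eqn:simpleStructured}, i.e.\ $W^2$ would have constant off-diagonal entries. But $W^2 = W\cdot W$ is computed by Lemma~\ref{lem:ABentries} with $a=b=w$ and $w(0)=0$: its off-diagonal entries are the correlations $q\,c_m(w,w)$ for $m=1,\ldots,t-1$ (within the diagonal blocks) and the constant $((q-1)/t)\breve{w}^2$ (in the off-diagonal blocks). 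Using $c_m(w,w)=c_{t-m}(w,w)$, a constant off-diagonal would force $q c_1(w,w),\ldots,q c_{\lfloor t/2\rfloor}(w,w)$ and $((q-1)/t)\breve{w}^2$ to coincide, contradicting the hypothesis that these values are not all equal. Hence two off-diagonal entries of $M^2$ differ, the corresponding $S,S'$ make the leading coefficient nonzero, and the proof concludes. The main obstacle is precisely this last transfer: the hypothesis constrains the entries of $W^2$, whereas the leading coefficient lives in $(W^2)^{-1}$, and the bridge between them is the observation that being constant on the diagonal and constant off the diagonal is preserved under inversion (Lemma~\ref{lem:preserveInverseType1}). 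One could additionally verify, via Lemma~\ref{lem:sameLength} and the column-sum identity of Corollary~\ref{cor:rowSumsOfW}, that the degree-$0$ and degree-$1$ terms of $\mathfrak{q}(\eta) - \mathfrak{q}(\eta')$ cancel, so that the polynomial is in fact $(\varrho-1)^2$ times a nonzero constant and the only excluded value is $\varrho=1$; but this refinement is not needed for the stated ``at most two.''
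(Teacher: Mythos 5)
Your proof is correct and follows essentially the same route as the paper's: both identify the leading coefficient of $\mathfrak{q}(\eta)-\mathfrak{q}(\eta')$ with a difference of off-diagonal entries of $W^{-2}$, and both use Lemma~\ref{lem:preserveInverseType1} together with Lemma~\ref{lem:ABentries} to transfer the hypothesis on $W^2$ to $W^{-2}$ (you state this as a contrapositive, the paper runs it forward and names the two subsets $S,S'$ explicitly in two cases). The quadratic-form packaging $\mathbf{1}_S^\top W^{-2}\mathbf{1}_S$ is a tidy reformulation of the paper's computation of $\sum_\la (Q_\la^2-Q_\la^{\prime 2})$, but the underlying argument is the same.
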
 

\begin{proof}
We have already seen that the hypothesis that $w$ is nondegenerate means that the matrix $\mathcal{W}$ of \eqref{eqn:CirculantA} is invertible.  Corollary~\ref{cor:solveEquations} then implies that the matrix $W$ of \eqref{eqn:StrMat2} is invertible and that its inverse $W^{-1}$ also has form \eqref{eqn:StrMat2}.  Let $b \colon  \F_q \ra \Q$ be the $H$-invariant function given by Corollary~\ref{cor:solveEquations} with $B = W^{-1}$.

The hypothesis that $((q-1)/t) \breve{w}^2, q c_m(w,w)$, $m=1, 2, \ldots, \lfloor t/2 \rfloor$, are not all equal, and Lemma~\ref{lem:ABentries}, imply that $W^2$ does not have form \eqref{eqn:simpleStructured}.  Then, Lemma~\ref{lem:preserveInverseType1} implies that $B^2=W^{-2}$ also does not have form \eqref{eqn:simpleStructured}.  Thus, Lemma~\ref{lem:ABentries}, applied to $B^2=W^{-2}$, implies that $((q-1)/t) \breve{b}^2, t b_0^2+q c_m(b,b)$, $m=1, 2, \ldots, \lfloor t/2 \rfloor$, are not all equal.

There are two cases: 
\begin{itemize}
\item $((q-1)/t) \breve{b}^2 \neq t b_0^2 + q c_m(b,b)$ for some $m$, or
\item $t b_0^2 + q c_{m_1}(b,b) \neq t b_0^2 + q c_{m_2}(b,b)$ for some $m_1 \neq m_2$.
\end{itemize}

If $((q-1)/t) \breve{b}^2 \neq t b_0^2 + q c_m(b,b)$, set $S = \{ \ell_0, \al^m \ell_0\}$ and $S'=\{ \ell_0, \ell_\infty \}$.   If $t b_0^2 + q c_{m_1}(b,b) \neq t b_0^2 + q c_{m_2}(b,b)$, set $S = \{ \ell_0, \al^{m_1} \ell_0\}$ and $S'=\{ \ell_0, \al^{m_2} \ell_0 \}$.  In both cases, $\size{S} = \size{S'} =2$.

Applying Lemma~\ref{lem:NonNegEtas} with $s=2$ to both $S$ and $S'$, we see that $\eta(\la) \geq 0$ and $\eta'(\la) \geq 0$, provided $\delta_{\min}^{(2)} \leq \varrho \leq \delta_{\max}^{(2)}$.

Let $f(\varrho) = \mathfrak{q}(\eta) - \mathfrak{q}(\eta')$, and recall from the proof of Lemma~\ref{lem:NonNegEtas} that $\eta(\la) = P(\la) + Q(\la) \varrho$ and $\eta'(\la) = P'(\la) + Q'(\la) \varrho$.  Write $\eta_\la$ for $\eta(\la)$, etc., to save space.
Expand $f(\varrho)$ and use Lemma~\ref{lem:sameLength}: 
\begin{align*}
f(\varrho) &= \sum_{\la \in \OO} (\eta_\la^2 - \eta_\la^{'2}) \\
&= \sum_{\la \in \OO} (P_\la^2 - P^{'2}_\la) + 2 \sum_{\la \in \OO} (P_\la Q_\la - P'_\la Q'_\la) \varrho + \sum_{\la \in \OO} (Q_\la^2 - Q_\la^{'2}) \varrho^2 .
\end{align*}

Examine $\sum_{\la \in \OO} (Q_\la^2 - Q_\la^{'2})$ closely.  If $((q-1)/t) \breve{b}^2 \neq t b_0^2 + q c_m(b,b)$, then from \eqref{eqn:defineOmegas}, $Q_\la = g_{\la,\ell_0} + g_{\la, \al^m \ell_0}$ and $Q'_\la = g_{\la, \ell_0} + g_{\la, \ell_\infty}$, where $g_{\la, \la'}$ is the $\la, \la'$-entry of $W^{-1}$.  
Lemma~\ref{lem:blockTypes}, applied to $b$,  says that every row or column of $B=W^{-1}$ has the same entries, just in different positions. This means that the values of $\sum_{\la \in \OO} g_{\la,\la'}^2$ are independent of $\la'$.  Thus  $\sum_{\la \in \OO} (Q_\la^2 - Q_\la^{'2}) = 2\sum_{\la \in \OO} (g_{\la,\ell_0} g_{\la, \al^m \ell_0} - g_{\la, \ell_0} g_{\la, \ell_\infty})$.

Because $W$ and $W^{-1}$ are symmetric matrices, we recognize the sum $\sum_{\la \in \OO} (Q_\la^2 - Q_\la^{'2})$ as twice the difference of the $\ell_0, \al^m \ell_0$-entry of $W^{-2}$ minus the $\ell_0, \ell_{\infty}$-entry of $W^{-2}$.  By Lemma~\ref{lem:ABentries}, those entries are $t b_0^2 + q c_m(b,b)$ and $((q-1)/t) \breve{b}^2$, respectively. Thus $\sum_{\la \in \OO} (Q_\la^2 - Q_\la^{'2}) \neq 0$.  

Using the same argument, if $t b_0^2 + q c_{m_1}(b,b) \neq t b_0^2 + q c_{m_2}(b,b)$, we recognize the sum $\sum_{\la \in \OO} (Q_\la^2 - Q_\la^{'2})$ as twice the difference of the $\ell_0, \al^{m_1} \ell_0$-entry of $W^{-2}$ minus the $\ell_0, \al^{m_2} \ell_0$-entry of $W^{-2}$.  Again, by Lemma~\ref{lem:ABentries}, those entries are $t b_0^2 + q c_{m_1}(b,b)$ and $t b_0^2 + q c_{m_2}(b,b)$, respectively. Thus $\sum_{\la \in \OO} (Q_\la^2 - Q_\la^{'2}) \neq 0$.  

In either case, we conclude that $f(\varrho)$ is a nontrivial quadratic function, with at most two rational zeros.  By having $\varrho$, with $\delta_{\min}^{(2)} \leq \varrho \leq \delta_{\max}^{(2)}$, avoid the zeros of $f$, we see that $\mathfrak{q}(\eta) \neq \mathfrak{q}(\eta')$.
\end{proof}

\begin{rem}
It can be shown, under the hypotheses of Theorem~\ref{thm:differentQs}, that $\varrho=1$ is a double zero of $f$.  
\end{rem}

Let us examine the hypotheses of Theorem~\ref{thm:differentQs} in the situation where $t=2$.  As $t \mid (q-1)$, $q$ must be odd when $t=2$.  
The weight $w$ has only two nonzero values, $w_1=w(\al)$ and $w_2 = w(\al^2)$.  In fact, $w_1 \neq w_2$; otherwise, $\sym(w) = \F_q^\times$, and we would have $t=1$ instead of $t=2$.  We conclude that $w$ is nondegenerate.  Indeed, from \eqref{eqn:CirculantA},
\[  \mathcal{W} = \begin{bmatrix}
w_2 & w_1 \\
w_1 & w_2
\end{bmatrix} . \]
Then $\det \mathcal{W} = w_2^2 - w_1^2 = (w_1 + w_2)  (w_1 - w_2) \neq 0$.

Because $1 \leq m \leq \lfloor t/2 \rfloor$, the only possible value of $m$ is $m=1$ when $t=2$.  Now consider $((q-1)/t) \breve{w}^2 = q c_1(w,w)$, with $t=2$.

Expand $((q-1)/2) \breve{w}^2 = q c_1(w,w)$ to get
$(q-1) (w_1^2 + 2 w_1 w_2 + w_2^2) = 4 q w_1 w_2$, which is symmetric in $w_1$ and $w_2$.  Set $x=w_2/w_1$ and manipulate to get
\[  w_1 \left( (q-1) - 2(q+1)x + (q-1) x^2 \right) = 0 . \] 
The symmetry of the equation in $w_1, w_2$ implies that the solutions in $x$ will be reciprocals.  Solving for $x$ yields
\[ x = \frac{q+1 \pm 2 \sqrt{q}}{q-1} . \]
When $q$ is not a square, these values of $x$ (and hence at least one of $w_1, w_2$) are irrational.  But the values are rational when $q$ is a square.

Here are some small values, remembering that $q$ must be odd (and a prime power).
\begin{equation}  \label{eqn:t2Equality}
\begin{array}{c|c|c|c|c}
q & 9 & 25 & 49 & 81 \\ \hline
x, 1/x & 2, 1/2 & 3/2, 2/3 & 4/3, 3/4 & 5/4, 4/5
\end{array}
\end{equation}

To summarize:
\begin{prop}  \label{prop:nonSquares}
Suppose that $q$ is an odd power of an odd prime.  Let $w$ be any integer-valued weight on $\F_q$, with $w(0)=0$, $w(r)>0$ for $r \neq 0$.  If $t=(q-1)/\size{\sym(w)} = 2$, then $w$ is nondegenerate and satisfies $((q-1)/2) \breve{w}^2 \neq q c_1(w,w)$.
\end{prop}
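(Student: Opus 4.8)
The plan is to exploit the fact that when $t=2$ the weight $w$ is almost completely pinned down: by $H$-invariance it takes only the two nonzero values $w_1 = w(\al)$ and $w_2 = w(\al^2)$, and these must be distinct, since $w_1 = w_2$ would force $\sym(w) = \F_q^\times$ and hence $t=1$, contrary to assumption. For nondegeneracy I would simply write down the $2 \times 2$ matrix $\mathcal{W}$ of \eqref{eqn:CirculantA}, whose rows and columns are indexed by $\{0,1\}$; since $\mathcal{W}_{i,j} = w(\al^{i+j})$ and $w(\al^0)=w(\al^2)=w_2$, this gives $\mathcal{W} = \left[\begin{smallmatrix} w_2 & w_1 \\ w_1 & w_2 \end{smallmatrix}\right]$, whose determinant is $w_2^2 - w_1^2 = (w_1+w_2)(w_2-w_1)$. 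As $w_1,w_2 > 0$ and $w_1 \neq w_2$, this is nonzero, so $\mathcal{W}$ is invertible and $w$ is nondegenerate.

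For the inequality, I would first evaluate the two relevant quantities from \eqref{eq:correlation} in the $t=2$ case: $\breve{w} = w(\al^0)+w(\al) = w_1 + w_2$, and $c_1(w,w) = w(\al^0)w(\al) + w(\al)w(\al^2) = 2 w_1 w_2$. The claim $((q-1)/2)\breve{w}^2 \neq q\, c_1(w,w)$ then becomes $(q-1)(w_1+w_2)^2 \neq 4 q w_1 w_2$. I would argue by contradiction: assuming equality, I divide through by $w_1^2$ and set $x = w_2/w_1$, obtaining the quadratic $(q-1)x^2 - 2(q+1)x + (q-1) = 0$. Its discriminant is $4\big((q+1)^2 - (q-1)^2\big) = 16q$, so the two roots are $x = \big((q+1) \pm 2\sqrt{q}\big)/(q-1)$.

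The crux of the argument — and where the hypothesis on $q$ enters decisively — is a rationality obstruction, which I expect to be the only real subtlety. Because $w$ is integer-valued with $w_1, w_2 > 0$, the ratio $x = w_2/w_1$ is a positive \emph{rational} number; but the root formula shows that $x \in \Q$ forces $\sqrt{q} \in \Q$, i.e.\ that $q$ is a perfect square. Writing $q = p^a$ with $p$ an odd prime, $q$ is a square exactly when $a$ is even, which is excluded since $q$ is an odd power of $p$. Hence $\sqrt{q}$ is irrational, $x$ cannot be rational, and the assumed equality is impossible, giving the desired strict inequality. Everything apart from the clean discriminant identity $(q+1)^2 - (q-1)^2 = 4q$ that isolates $\sqrt{q}$ is routine, and the two-value structure of $w$ means the entire problem collapses to this single ratio $x$.
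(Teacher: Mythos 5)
Your proposal is correct and follows essentially the same route as the paper: the same $2\times 2$ determinant $(w_1+w_2)(w_2-w_1)\neq 0$ for nondegeneracy, and the same reduction of the hypothetical equality to the quadratic $(q-1)x^2-2(q+1)x+(q-1)=0$ in $x=w_2/w_1$ with roots $(q+1\pm 2\sqrt q)/(q-1)$, ruled out by the irrationality of $\sqrt q$ when $q$ is an odd prime power with odd exponent.
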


\section{Final arguments}
In this section, we finish the proof of the main theorem, Theorem~\ref{thm:MainV2}.

\begin{proof}[Proof of Theorem~\textup{\ref{thm:MainV2}}]
Apply Theorem~\ref{thm:differentQs}: pick a rational value of $\varrho$ that satisfies $\delta_{\min}^{(2)} \leq \varrho \leq \delta_{\max}^{(2)}$ and is such that $\mathfrak{q}(\eta) \neq \mathfrak{q}(\eta')$, where $\eta, \eta'$ are the solutions to $W \eta = \omega$ and $W \eta' = \omega'$ coming from \eqref{eqn:defineOmegas}.  Because $\omega$ and $\omega'$ are rational vectors and $W$ is a matrix  with integer entries that is invertible over $\Q$, the vectors $\eta$ and $\eta'$ have (nonnegative) rational entries.

Choose a sufficiently large integer $N$ so that $N\eta$ and $N\eta'$ have (nonnegative) integer entries.  Define linear codes $C$ and $D$ by using $N\eta$ and $N\eta'$, respectively, as their multiplicity functions.  Their lists of orbit weights are, respectively, $N \omega = W(N\eta)$ and $N \omega' = W(N \eta')$.  Because $W$ is an integer matrix with positive entries and $N \eta$, $N \eta'$ are nonnegative integer vectors, the entries of $N\omega$, $N \omega'$ are nonnegative integers.  By the format of \eqref{eqn:defineOmegas}, we have
\[  \wwe_C = \wwe_D = 1 + 2 \size{\sym(w)} t^{N \varrho} + (t(q+1)-2) \size{\sym(w)} t^N .  \]
That is, there are two $\sym(w)$-orbits having weight $N \varrho$, and the remaining orbits have weight $N$.

Now consider the dual codes.  We claim that $A_{2 \mathring{w}}(C^\perp) \neq A_{2 \mathring{w}}(D^\perp)$.  
By Lemma~\ref{lem:2MathringInequality}, 
\begin{align*}
A_{2 \mathring{w}}(C^\perp) &- A_{2 \mathring{w}}(D^\perp) \\
= &\asing_{2 \mathring{w}}(C^\perp) - \asing_{2 \mathring{w}}(D^\perp) 
+ \adble_{2 \mathring{w}}(C^\perp) - \adble_{2 \mathring{w}}(D^\perp) .
\end{align*}

Because $\wwe_C = \wwe_D$, Proposition~\ref{prop:singletonsDoNotDistinguish} implies that $\asing_s(C^\perp) = \asing_s(D^\perp)$, for all $s$.
Remember that $\varrho$ was chosen so that $\mathfrak{q}(\eta) \neq \mathfrak{q}(\eta')$.  Then, using Lemma~\ref{lem:sameLength}, we see that $\mathfrak{q}(N\eta) - \mathfrak{q}(N\eta') = N^2 (\mathfrak{q}(\eta) - \mathfrak{q}(\eta')) \neq 0$.  By Proposition~\ref{prop:2w-weights}, we have $\adble_{2 \mathring{w}}(C^\perp) - \adble_{2 \mathring{w}}(D^\perp) \neq 0$.  Thus $A_{2 \mathring{w}}(C^\perp) \neq A_{2 \mathring{w}}(D^\perp)$. 
\end{proof}

\section{Symmetrized enumerators}
In this short section, we present a brisk summary of symmetrized enumerators given by a group action.  This will allow us to calculate the $w$-weight enumerator in many examples.  Proofs and additional details can be found in \cite{MR3336966},  \cite{wood:duality}, \cite{wood:turkey}, or \cite{MR4963840}.

We will focus on linear codes over a finite field $\F_q$.  The field $\F_q$ has size $q = p^l$ for some prime $p$ and positive integer $l$.  The \emph{absolute trace} $\tr \colon  \F_q \ra \F_p$ is given by 
\[  \tr(r) = r + r^p + r^{p^2} + \ldots + r^{p^{l-1}}, \quad r \in \F_q . \]
The absolute trace is a nonzero homomorphism of $\F_p$-vector spaces.  Fix a primitive $p$th root of unity $\zeta \in \C$.  Define $\chi \colon  \F_q \ra \C^\times$ by $\chi(r) = \zeta^{\tr(r)}$;  $\chi$ is a generating character for $\F_q$.

Fix a subgroup $H \subseteq \F_q^\times$.  Fix a primitive element $\al$ in $\F_q$, and let $t$ be the smallest positive integer such that $\al^t$ generates $H$.  The subgroup $H$ acts on $\F_q$ by multiplication.  The orbits of $H$ are: $P_0=\{0\}$ and the cosets $P_i=\al^i H$, $i=1, 2, \ldots, t$, inside $\F_q^\times$.  Note that $P_t = H$.  The orbits form a partition $\mathcal{P}_H = (P_i)_{i=0}^t$ of $\F_q$.  For $r \in \F_q$, write $\nu(r) = i$ when $r \in P_i$.

The \emph{Kravchuk matrix} $K$ associated to $\mathcal{P}_H$ is the $(t+1) \times (t+1)$ matrix defined for $i,j = 0, 1, \ldots, t$ by
\begin{align}
K_{0,j} &= \size{P_j} , \notag \\
K_{i,j} &= \sum_{s \in P_j} \chi(\al^i s) , \quad i=1, 2, \ldots, t . \label{eqn:Kravchuk}
\end{align}

The \emph{symmetrized enumerator} $\se_C$ of a linear code $C \subseteq \F_q^n$ is the partition enumerator of the partition $\mathcal{P}_H$.  It is an element of the polynomial ring $\C[Z_0, Z_1, \ldots, Z_t]$ defined by 
\begin{equation}  \label{eqn:defnOfSymEnum}
\se_C(Z_0, Z_1, \ldots, Z_t) = \sum_{c \in C} \prod_{k=1}^n Z_{\nu(c_k)} ,
\end{equation}
where $c = (c_1, c_2, \ldots, c_n) \in C$.  The product in \eqref{eqn:defnOfSymEnum} keeps track of how many entries of $c$ belong to each $H$-orbit.

The symmetrized enumerator satisfies the MacWilliams identities:
\begin{equation}  \label{eqn:MWidsForSymEnum}
\se_{C^\perp}(Z_0, \ldots, Z_t) = \frac{1}{\size{C}} \se_C(\mathcal{Z}_0, \mathcal{Z}_1, \ldots, \mathcal{Z}_t)|_{\mathcal{Z}_i \leftarrow \sum_{j=0}^t K_{i,j} Z_j} .
\end{equation}

Suppose $w$ is an integer-valued weight on $\F_q$, with $w(0)=0$ and $w(r)>0$ for nonzero $r$.  Let $H = \sym(w)$, and form the partition $\mathcal{P}_H$, as above.   Then the $w$-weight enumerator $\wwe_C$ of a linear code $C \subseteq \F_q^n$ can be recovered from the symmetrized enumerator $\se_C$ via specializing variables $Z_0 \leadsto 1$ and $Z_i \leadsto y^{w(\al^i)}$ for $i=1, 2, \ldots, t$.  
I.e.,
\begin{equation}  \label{eqn:WweFromSymEnum}
\wwe_C(y) = \se_C(Z_0, Z_1, \ldots, Z_t)|_{Z_0 \leftarrow 1, Z_i \leftarrow y^{w(\al^i)}} .
\end{equation}
Note that any element of the orbit $P_i = \al^i H$ has the form $r = \al^i h$ for some $h \in H$.  Then $w(r) = w(\al^i)$ because $h \in H = \sym(w)$.

\section{Examples}

In this section, we provide a few examples that illustrate Theorem~\ref{thm:MainV2} and its hypotheses.  All complicated calculations were carried out using SageMath \cite{sagemath}.  The first example carries out explicit calculations that mirror the proof of Theorem~\ref{thm:MainV2}.  

\begin{ex}  \label{ex:F5}
Consider the Euclidean weight $\we$ on $\F_5$:
\[ \begin{array}{c|ccccc}
r & 0 & 1 & 2 & 3 & 4 \\ \hline
\we(r) & 0 & 1 & 4 & 4 & 1
\end{array} . \]
Note that $\sym(\we) = \{ \pm 1\} = \{1,4\}$.  We choose $\al=2$, so that $t=2$.  Then the matrix $\mathcal{W}$ of \eqref{eqn:CirculantA} is 
\[  \mathcal{W} = \begin{bmatrix}
1 & 4 \\ 4 & 1
\end{bmatrix} . \]
Then $\det \mathcal{W} = -15 \neq 0$, so the weight $\we$ is nondegenerate.  (In general, the Euclidean weight is nondegenerate over any $\Z/m\Z$ by \cite{MR3947347}.)
As in Example~\ref{ex:F5Amatrix}, we see that 
\[  W = \begin{bmatrix}
1 & 4 & 0 & 0 & 4 & 1 & 1 & 4 & 4 & 1 & 1 & 4 \\
4 & 1 & 0 & 0 & 1 & 4 & 4 & 1 & 1 & 4 & 4 & 1 \\ 
0 & 0 & 1 & 4 & 1 & 4 & 1 & 4 & 1 & 4 & 1 & 4 \\
0 & 0 & 4 & 1 & 4 & 1 & 4 & 1 & 4 & 1 & 4 & 1 \\
4 & 1 & 1 & 4 & 0 & 0 & 1 & 4 & 4 & 1 & 4 & 1 \\
1 & 4 & 4 & 1 & 0 & 0 & 4 & 1 & 1 & 4 & 1 & 4 \\
1 & 4 & 1 & 4 & 1 & 4 & 4 & 1 & 4 & 1 & 0 & 0 \\
4 & 1 & 4 & 1 & 4 & 1 & 1 & 4 & 1 & 4 & 0 & 0 \\
4 & 1 & 1 & 4 & 4 & 1 & 4 & 1 & 0 & 0 & 1 & 4 \\
1 & 4 & 4 & 1 & 1 & 4 & 1 & 4 & 0 & 0 & 4 & 1 \\
1 & 4 & 1 & 4 & 4 & 1 & 0 & 0 & 1 & 4 & 4 & 1 \\
4 & 1 & 4 & 1 & 1 & 4 & 0 & 0 & 4 & 1 & 1 & 4
\end{bmatrix} .
\]
Then $ -75 W^{-1}$ equals
\[ \left[ \begin{array}{rrrrrrrrrrrr}
1 & -4 & 6 & 6 & -4 & 1 & 1 & -4 & -4 & 1 & 1 & -4 \\
-4 & 1 & 6 & 6 & 1 & -4 & -4 & 1 & 1 & -4 & -4 & 1 \\
6 & 6 & 1 & -4 & 1 & -4 & 1 & -4 & 1 & -4 & 1 & -4 \\
6 & 6 & -4 & 1 & -4 & 1 & -4 & 1 & -4 & 1 & -4 & 1 \\
-4 & 1 & 1 & -4 & 6 & 6 & 1 & -4 & -4 & 1 & -4 & 1 \\
1 & -4 & -4 & 1 & 6 & 6 & -4 & 1 & 1 & -4 & 1 & -4 \\
1 & -4 & 1 & -4 & 1 & -4 & -4 & 1 & -4 & 1 & 6 & 6 \\
-4 & 1 & -4 & 1 & -4 & 1 & 1 & -4 & 1 & -4 & 6 & 6 \\
-4 & 1 & 1 & -4 & -4 & 1 & -4 & 1 & 6 & 6 & 1 & -4 \\
1 & -4 & -4 & 1 & 1 & -4 & 1 & -4 & 6 & 6 & -4 & 1 \\
1 & -4 & 1 & -4 & -4 & 1 & 6 & 6 & 1 & -4 & -4 & 1 \\
-4 & 1 & -4 & 1 & 1 & -4 & 6 & 6 & -4 & 1 & 1 & -4
\end{array} \right]. \]
Note that the row and column sums of $W$ equal $25$, while those of $W^{-1}$ equal $1/25$, as expected from Corollary~\ref{cor:rowSumsOfW}.  
The matrix $W^2$ has $2 \times 2$ blocks $\left[\begin{smallmatrix}
85 & 40 \\ 40 & 85
\end{smallmatrix}\right]$ along the diagonal, with all other entries equal to $50$.
Similarly, the matrix $5625 W^{-2}$ has $2 \times 2$ blocks $\left[\begin{smallmatrix}
157 & 32 \\ 32 & 157
\end{smallmatrix}\right]$ along the diagonal, with all other entries equal to $-18$.
Because $c_0(w,w) = 17$, $c_1(w,w) = 8$, and $\breve{w}=5$, we see that $W^2$ matches the form of Lemma~\ref{lem:ABentries}.  Similarly for the $\sym(w)$-invariant function $b$ that satisfies $B = -75 W^{-1}$: $b(0)=6$, $b(1)=1$, $b(2)=-4$.  Then $\breve{b}=-3$, $c_0(b,b)=17$, and $c_1(b,b)=-8$.

As in the proof of Theorem~\ref{thm:differentQs}, set
\begin{align*}
\omega &= \langle 1,1,\varrho, \varrho, 1,1,1,1,1,1,1,1 \rangle , \\
\omega' &= \langle \varrho,1,\varrho, 1, 1,1,1,1,1,1,1,1 \rangle .
\end{align*}
Solve $W\eta = \omega$ and $W \eta' = \omega'$; then clear denominators using $N = 75$:
\begin{align*}
75 \eta = &\langle -12\varrho + 15, -12\varrho + 15, 3\varrho, 3\varrho, 3\varrho, 3\varrho, 3\varrho, 3\varrho, 3\varrho, 3\varrho, 3\varrho, 3\varrho \rangle, \\
75 \eta' = &\langle -7\varrho + 10, -2\varrho + 5, -7\varrho + 10, -2\varrho + 5, 3\varrho, 3\varrho, \\
&\quad -2\varrho + 5, 8\varrho - 5, 3\varrho, 3\varrho, -2\varrho + 5, 8\varrho - 5 \rangle .
\end{align*}
All the entries will be nonnegative if $5/8 \leq \varrho \leq 5/4$.  
One calculates that $f(\varrho)= \mathfrak{q}(\eta) - \mathfrak{q}(\eta') = 100(\varrho -1)^2$.  The leading coefficient is related to entries of $W^{-2}$: $100 = 2(32 -(-18))$.

Use both extreme points of $\varrho$, in turn, suitably normalized:
\begin{align*}
\eta_{C_-} & = \langle 4, 4, 1, 1, 1, 1, 1, 1, 1, 1, 1, 1 \rangle , \\
\omega_{C_-} &= \langle 40, 40, 25, 25, 40, 40, 40, 40, 40, 40, 40, 40 \rangle , \\
\eta_{D_-} & = \langle 3, 2, 3, 2, 1, 1, 2, 0, 1, 1, 2, 0 \rangle , \\
\omega_{D_-} &= \langle 25, 40, 25, 40, 40, 40, 40, 40, 40, 40, 40, 40 \rangle ; \\
\eta_{C_+} & = \langle 0, 0, 3, 3, 3, 3, 3, 3, 3, 3, 3, 3 \rangle , \\
\omega_{C_+} &= \langle 60, 60, 75, 75, 60, 60, 60, 60, 60, 60, 60, 60 \rangle , \\
\eta_{D_+} & = \langle 1, 2, 1, 2, 3, 3, 2, 4, 3, 3, 2, 4 \rangle , \\
\omega_{D_+} &= \langle 75, 60, 75, 60, 60, 60, 60, 60, 60, 60, 60, 60 \rangle .
\end{align*}
The linear codes $C_-$ and $D_-$ have length $18$, while $C_+$ and $D_+$ have length $30$; cf., Lemma~\ref{lem:sameLength}.  
All the nonzero orbits have size $2$, so
\begin{align*}
\ewe_{C_-} = \ewe_{D_-} &= 1 + 4 y^{25} + 20 y^{40} , \\
\ewe_{C_+} = \ewe_{D_+} &= 1 + 20 y^{60} + 4 y^{75} .
\end{align*}
Note that the ratios of the weights are $25/40=5/8$ and $75/60 = 5/4$, by design.
From Lemma~\ref{lem:sameLength}, Lemma~\ref{lem:2MathringInequality}, and Proposition~\ref{prop:2w-weights}, we have 
\begin{align*}
A_2(C_-^\perp) - A_2(D_-^\perp) &= 24-20 = 4, \\
A_2(C_+^\perp) - A_2(D_+^\perp) &= 60-56 = 4.
\end{align*} 

The symmetrized enumerators \eqref{eqn:defnOfSymEnum} for the linear codes are:
\begin{align*}
\se_{C_-} &= Z_0^{18} + 4 Z_0^8 Z_1^5 Z_2^5 + 20 Z_0^2 Z_1^8 Z_2^8 , \\
\se_{D_-} &= Z_0^{18} + 4 Z_0^5 Z_1^9 Z_2^4 + 16 Z_0^2 Z_1^8 Z_2^8 + 4 Z_0^5 Z_1^4 Z_2^9 ; \\
\se_{C_+} &= Z_0^{30} + 20 Z_0^6 Z_1^{12} Z_2^{12} + 4 Z_1^{15} Z_2^{15} , \\
\se_{D_+} &= Z_0^{30} + 4 Z_0^3 Z_1^{16} Z_2^{11} + 16 Z_0^6 Z_1^{12} Z_2^{12} + 4 Z_0^3 Z_1^{11} Z_2^{16} .
\end{align*}
By specializing $Z_0 \leadsto 1$, $Z_1 \leadsto y^4$, and $Z_2 \leadsto y$, the symmetrized enumerators yield the Euclidean weight enumerators via \eqref{eqn:WweFromSymEnum}.

The Kravchuk matrix \eqref{eqn:Kravchuk} for the MacWilliams identities for $\se$ is
\[  K = \begin{bmatrix}
1 & 2 & 2 \\
1 & (\sqrt{5}-1)/2 & -(\sqrt{5}+1)/2 \\
1 & -(\sqrt{5}+1)/2 & (\sqrt{5}-1)/2
\end{bmatrix} . \]
Then \eqref{eqn:MWidsForSymEnum}, followed by specialization \eqref{eqn:WweFromSymEnum}, yield
\begin{align*}
\wwe_{C_-^\perp} &= 1 + 24 y^2 + 296 y^3 +1900 y^4 + 10760 y^5 + \cdots , \\
\wwe_{D_-^\perp} &= 1 + 20 y^2 + 296 y^3 + 1956 y^4 + 10760 y^5 + \cdots ; \\
\wwe_{C_+^\perp} &= 1 + 60 y^2 + 1260 y^3 + 17880 y^4 + 182772 y^5 + \cdots , \\
\wwe_{D_+^\perp} &= 1 + 56 y^2 + 1304 y^3 + 17720 y^4 + 182816 y^5 + \cdots .
\end{align*}
\end{ex}

The next two examples discuss some of the weights and finite fields appearing in \eqref{eqn:t2Equality}. 
In one case, over $\F_9$, we find that the MacWilliams identities hold.  In the other case, over $\F_{25}$, we find that, while Theorem~\ref{thm:MainV2} does not apply, the linear codes given by Lemma~\ref{lem:NonNegEtas} nonetheless demonstrate that duality in not respected.

\begin{ex}  \label{ex:F9t2}
Let $q=9$.  The polynomial $x^2 - x -1$ is irreducible over $\F_3$.  Set $\F_9 = \F_3[\al]/(\al^2 - \al -1)$.  The multiplicative group $\F_9^\times$ is a cyclic group of order $8$, and it is generated by $\al$.  That is, $\al$ is a primitive element:
\[  \begin{array}{c|cccccccc}
i & 0 & 1 & 2 & 3 & 4 & 5 & 6 & 7 \\ \hline
\al^i & 1 & \al & 1 + \al & 1 + 2 \al & 2 & 2 \al & 2 + 2 \al & 2 + \al 
\end{array} . \]

There is a unique subgroup $H \subset \F_9^\times$ with $\size{H}=4$:  $H = \{ 1, 1+\al, 2, 2+2\al \}$; $H$ is generated by $\al^2$, so that $t=2$.  A weight $w$ on $\F_9$ with $\sym(w) = H$ will have two nonzero positive values, say $w_2$ on $H$ and $w_1$ on the coset $\al H = \{ \al, 1+2\al, 2\al, 2+\al \}$.  Note that $w_1 \neq w_2$.  Indeed, if $w_1=w_2$, then $\sym(w) = \F_9^\times$ is strictly larger than $H$.

The weight $w$ is nondegenerate.  The matrix $\mathcal{W}$ of \eqref{eqn:CirculantA} is
\[  \mathcal{W} = \begin{bmatrix}
w_2 & w_1 \\
w_1 & w_2
\end{bmatrix} , \]
whose determinant is nonzero: $\det \mathcal{W} = w_2^2 - w_1^2 \neq 0$.

Of special interest from \eqref{eqn:t2Equality} is the case where $w_1=1, w_2 = 2$ (or the reverse: $w_1=2, w_2=1$).  As expected, $((q-1)/t) \breve{w}^2 = 36= q c_1(w,w)$, so Theorem~\ref{thm:MainV2} does not apply.

Define two $\F_3$-linear isomorphisms $f,g \colon  \F_9 \ra \F_3^2$, using $\{1, \al \}$ as an $\F_3$-basis of $\F_9$:
\begin{align*}
f(x+y\al) = (x+y,x) , \quad\quad g(x+y\al) = (y, x-y) .
\end{align*}
Here are their values, with $(a,b) \in \F_3^2$ written as $ab$:
\[  \begin{array}{c|ccccccccc}
x+y\al & 0 & 1 & \al & 1 + \al & 1 + 2 \al & 2 & 2 \al & 2 + 2 \al & 2 + \al \\
f(x+y\al) & 00 & 11 & 10 & 21 & 01 & 22 & 20 & 12 & 02 \\
g(x+y\al) & 00 & 01 & 12 & 10 & 22 & 02 & 21 & 20 & 11 
\end{array} . \]
When $w$ has $w_1=1, w_2=2$, then $f$ is an isometry between $\F_9$ with the weight $w$ and $\F_3^2$ with the Hamming weight from $\F_3$.  When the values are reversed, i.e., $w_1=2, w_2=1$, it is $g$ that is an isometry to the Hamming weight.  

Suppose two vectors $v, v' \in \F_9^n$ satisfy $v \cdot v'=0$ using the standard dot product over $\F_9$.  Write $v = (x_1+y_1 \al, \ldots, x_n + y_n \al)$ and similarly for $v'$.  Note that the absolute trace of $r \in \F_9$ is $\tr(r) = r + r^3$, so that $\tr(1)=2$, $\tr(\al) = 1$, and $\tr(\al^2) = 0$.  Then
\begin{align*}
0 = \tr(v \cdot v') &= \sum_{i=1}^n \left( x_i x'_i + (x_i y'_i +y_i x'_i ) \al + y_i y'_i \al^2 \right) \\
&= \sum_{i=1}^n \left(  x_i x'_i \tr(1) + (x_i y'_i + y_i x'_i ) \tr(\al) + y_i y'_i \tr(\al^2) \right) \\
&= \sum_{i=1}^n \left(2 x_i x'_i + x_i y'_i + y_i x'_i \right) \\
&= \begin{cases}
\sum_{i=1}^n \left( (x_i+y_i)x'_i + x_i (x'_i + y'_i) \right) \\
\sum_{i=1}^n \left( y_i y'_i - (x_i - y_i)(x'_i - y'_i) \right) .
\end{cases}
\end{align*}

While the last expressions are not the standard dot product on $\F_3^{2n}$, it can be shown that the MacWilliams identities for the Hamming weight still hold for those inner products.  Thus, the two weights on $\F_9$ considered here inherit the MacWilliams identities and therefore respect duality.
\end{ex}

\begin{ex}
Let $q=25$ and $t=2$.  The polynomial $x^2-x-3$ is irreducible over $\F_5$.  Use this polynomial to present $\F_{25} = \F_5[\be]/(\be^2-\be-3)$.  The multiplicative group $\F_{25}^\times$ is a cyclic group of order $24$, and $\be$ is a generator (primitive element).  There is a unique subgroup $H$ of order $12$; $H$ is generated by $\be^2 = \be + 3$.  A weight $w$ with $\sym(w) = H$ will have two nonzero positive values, say $w_2$ on $H$ and $w_1$ on the other coset $\be H$.  As argued in Example~\ref{ex:F9t2}, $w_1 \neq w_2$.

Of special interest from \eqref{eqn:t2Equality} is the case where $w_1=2, w_2 = 3$ (or the reverse: $w_1=3, w_2=2$).  SageMath-assisted calculations give $\delta_{\min}^{(2)} = 5/6$ and $\delta_{\max}^{(2)} = 25/24$.  Further calculations, with $\varrho \in \{ 5/6, 25/24 \}$ in \eqref{eqn:defineOmegas}, yield two pairs of linear codes: $C,D$ of length $62$, and $C', D'$ of length $250$, with the following $w$-weight enumerators:
\begin{align*}
\wwe_C &= 1 + 24 t^{125} + 600 t^{150} , \\
\wwe_D &= 1 + 24 t^{125} + 600 t^{150} , \\
\wwe_{C^\perp} &= 1 + 360 t^4 + 1464 t^5 + 114120 t^6 + \cdots , \\
\wwe_{D^\perp} &= 1 + 360 t^4 + \hphantom{1}864 t^5 + 115920 t^6 + \cdots ; \\
\wwe_{C'} &= 1 + 600 t^{600} + 24 t^{625} , \\
\wwe_{D'} &= 1 + 600 t^{600} + 24 t^{625} , \\
\wwe_{C^{'\perp}} &= 1 + 6000 t^4 + 15000 t^5 + 7116000 t^6 + \cdots , \\
\wwe_{D^{'\perp}} &= 1 + 6000 t^4 + 14400 t^5 + 7117800 t^6 + \cdots .
\end{align*}
For both values of $\varrho$, we see that the dual codes differ at $A_5$.  Thus, the weight $w$ does not respect duality.

The values of $A_5$ for a dual code can be computed directly from the multiplicity function $\eta$ of the primal code $C$.  A vector of weight $5$ must be a doubleton with one nonzero entry in $H$ and the other nonzero entry in $\be H$.   A doubleton will be a dual codeword if $x \be^{i_1} \ell_{\mu_1} + y \be^{i_2} \ell_{\mu_2} = 0$, where $x, y$ are the nonzero entries of the doubleton, appearing in positions $\be^{i_1} \ell_{\mu_1}, \be^{i_2} \ell_{\mu_2}$, respectively. 
Because the lines $\ell_\mu$ are linearly independent, we must have $\mu_1 = \mu_2$.  Then $x \be^{i_1} + y \be^{i_2} = 0$, or $x = - y \be^{i_2-i_1}$.  

Because $t=2$, $i_1, i_2 \in \{ 0, 1 \}$.  In order for $x,y$ to be in different cosets, $i_2-i_1$ must be odd.  Thus for every column equal to $\ell_\mu$ and every column equal to $\be \ell_\mu$, take any $y \in \F_{25}^\times$ and set $x = - y \be$.  Then $x \ell_\mu + y \be \ell_\mu = 0$.  This gives the count
\[  A_5(C^\perp) = \sum_{\mu} 24 \eta(\ell_\mu) \eta(\be \ell_\mu) . \]
\end{ex}

Our final example discusses the power weights $\wl$.
\begin{ex}
Consider the power weights $\wl$ of Example~\ref{ex:PowerWeights} over $\Z/m\Z$.  Because $w^{(0)}$ is the Hamming weight, the MacWilliams identities hold for $w^{(0)}$ over any $\Z/m\Z$, $m \geq 2$, \cite{wood:duality}.
The power weight $w^{(1)}$ equals the Lee weight $\wlee$.  We know that the MacWilliams identities for $\wlee$ hold over $\Z/m\Z$ for $m=2, 3, 4$, and $\wlee$ does not respect duality over $\Z/m\Z$ for $m \geq 5$, \cite{MR4119402}.

Assume $\ell \geq 1$.  Then the symmetry group 
$H = \sym(\wl) = \{ \pm 1\}$.  Consider first the situation where $m = p$, a prime.  If $p=2$ or $p=3$, all the power weights $\wl$ equal the Hamming weight.  Once again, we know that the MacWilliams identities hold.  So now assume $p \geq 5$.

For $p \geq 5$, each power weight $\wl$, $\ell \geq 1$, satisfies three of the five hypotheses of Theorem~\ref{thm:MainV2}, namely, $-1 \in \sym(\wl)$, $\sym(\wl) \neq \F_p^\times$ (this is where $p \geq 5$ is needed), and the minimum value $\mathring{w}^{(\ell)} = 1$ is achieved on $H = \{ \pm 1\}$.
Remaining are the hypotheses on nondegeneracy and on $q c_1(w,w)$, etc., having different values.

The Lee weight $\wlee=w^{(1)}$ and the Euclidean weight $\we = w^{(2)}$ are known to be nondegenerate over $\F_p$ \cite{DLW-deux}; in fact, they are nondegenerate over all $\Z/m\Z$ \cite{MR3947347}.  For $\ell \geq 3$, there is only numerical evidence.  For example, SageMath computations reveal that the matrices $\mathcal{W}^{(\ell)}$ of \eqref{eqn:CirculantA} have maximal rank for $\ell = 3,4, \ldots, 10$, for the first $100$ primes.

Now turn to the hypothesis on $q c_1(w,w)$, etc., having different values.  Because $\size{H}=2$, $t=(p-1)/2$.  If $p=5$, then $t=2$, and only $5 c_1(\wl,\wl)$ and $2 (\breve{w}^{(\ell)})^2$ are available.  The expressions are:
\begin{align*}
5 c_1(\wl,\wl) &= 5 \cdot 2^{\ell+1} , \\  
2 (\breve{w}^{(\ell)})^2 &= 2(1+2^\ell)^2 . 
\end{align*}
As $\ell \geq 1$, these values are different: the first is divisible by $4$, while the second is congruent to $2 \bmod 4$.

Likewise for $p=7$, where $t=3$: again, only  $7 c_1(\wl,\wl)$ and $2 (\breve{w}^{(\ell)})^2$ are available.  The expressions now are:
\begin{align*}
7 c_1(\wl,\wl) &= 7 (2^\ell + 3^\ell + 6^\ell) , \\
2 (\breve{w}^{(\ell)})^2 &= 2(1+2^\ell + 3^\ell)^2 .
\end{align*}
Again, the values are different: the first is odd; the second is even.

For $p  \geq 11$, we have $t \geq 5$, so $p c_2(\wl,\wl)$ is also available (and possibly other $p c_j(\wl,\wl)$).  Again, there is numerical evidence.  SageMath computations indicate that $c_1(\wl,\wl) \neq c_2(\wl,\wl)$ for $\ell = 1, 2, \ldots, 10$, for the first $1000$ primes greater than $7$.

Once one knows that some $\wl$ satisfies the hypotheses of Theorem~\ref{thm:MainV2} for all primes $p \geq 5$, one can then extend the results to all $\Z/m\Z$, $m \geq 4$ ($m \geq 5$ for $\ell=1$ only) by following the corresponding portion of the proof in \cite{MR4119402}.  This will require finding examples of linear codes that do not respect duality for $m=4, 6, 9$ ($m=6, 8, 9$ for $\ell=1$).
\end{ex}

We conclude with a conjecture.
\begin{conj}
The power weights $\wl$, $\ell \geq 2$, do not respect duality over any $\Z/m\Z$, $m \geq 4$.
\end{conj}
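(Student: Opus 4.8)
The plan is to derive the conjecture from Theorem~\ref{thm:MainV2} over prime fields and then bootstrap to all moduli, exactly along the route sketched in the final Example. First I would treat the prime case $m=p\geq 5$. Here $\sym(\wl)=\{\pm 1\}$ and $t=(p-1)/2$, and three of the five hypotheses of Theorem~\ref{thm:MainV2} hold immediately: $-1\in\sym(\wl)$ since $\wl(-r)=\wl(r)$; $\sym(\wl)\neq\F_p^\times$ because $\{\pm 1\}\subsetneq\F_p^\times$ once $p\geq 5$; and the minimum value $\mathring{w}^{(\ell)}=1$ is attained only on the orbit $\{\pm 1\}$. The remaining two hypotheses---nondegeneracy of the circulant $\mathcal{W}^{(\ell)}$ of \eqref{eqn:CirculantA} and the requirement that $2(\breve{w}^{(\ell)})^2, q\,c_1(\wl,\wl),\ldots,q\,c_{\lfloor t/2\rfloor}(\wl,\wl)$ not all coincide---carry the analytic weight of the argument.

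For nondegeneracy I would pass to the spectrum of $\mathcal{W}^{(\ell)}$. Being circulant, its eigenvalues are the discrete Fourier coefficients $\widehat{v}_k=\sum_{j=0}^{t-1} v_j\,\omega^{jk}$, $\omega=e^{2\pi i/t}$, where $(v_0,\ldots,v_{t-1})$ is the sequence $(1^\ell,2^\ell,\ldots,t^\ell)$ reordered by the discrete logarithm base $\al$; nondegeneracy is equivalent to $\widehat{v}_k\neq 0$ for every $k$. Each $\widehat{v}_k$ is, up to a factor, a twisted power sum $\sum_{r\in\F_p^\times}\abs{r}^\ell\,\psi(r)$ for an \emph{even} multiplicative character $\psi$ of $\F_p^\times$ (order dividing $t$). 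The cases $\ell=1$ and $\ell=2$ are already settled by \cite{MR3947347, DLW-deux}, so the conjecture for the Euclidean weight $\ell=2$ is in fact immediate from Theorem~\ref{thm:MainV2} over prime fields once the correlation hypothesis is checked; the substance of the conjecture lies in $\ell\geq 3$.

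The principal obstacle, as I see it, is proving $\widehat{v}_k\neq 0$ for all $k$, uniformly for fixed $\ell\geq 3$ and all large $p$---a nonvanishing statement for a family of character sums of $\abs{r}^\ell$ that is presently supported only by the SageMath computations cited in the Example. Two partial tools are available. When $\ell$ is large relative to $p$, the single largest value $t^\ell$ dominates, forcing $\abs{\widehat{v}_k}\geq t^\ell-(t-1)(t-1)^\ell>0$; this disposes of the regime $\ell\gtrsim p\log p$ but not of fixed $\ell$ with $p\to\infty$, which is precisely where genuine arithmetic input is required. For the correlation hypothesis I would instead compare growth rates: as $\ell\to\infty$ each autocorrelation $c_m(\wl,\wl)$ is dominated by its largest summand $\max_j(\abs{\al^j}\,\abs{\al^{j+m}})^\ell$, and these maximizing products come from different pairs of weights for $m=1$ versus $m=2$, so $c_1$ and $c_2$ separate for all but finitely many $\ell$ at each $p\geq 11$; the primes $p=5,7$ are dispatched for every $\ell\geq 1$ by the parity-of-$2$-divisibility computation already recorded in the Example, and the residual finitely many $(\ell,p)$ are verified numerically.

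Finally, granting the prime-field case for all $p\geq 5$, I would extend to arbitrary $m\geq 4$ by adapting the argument of \cite{MR4119402} for the Lee weight. The key structural fact is that the restriction of $w^{(\ell)}$ to the ideal $(m/d)\Z/m\Z\cong\Z/d\Z$ equals $(m/d)^\ell$ times the power weight on $\Z/d\Z$, since the centered representative of $(m/d)x$ modulo $m$ is $(m/d)$ times that of $x$ modulo $d$; this lets one transport code constructions along divisibility $d\mid m$, and rescaling weights does not affect whether two enumerators agree or differ. Thus any $m$ divisible by a prime $p\geq 5$ inherits a counterexample from $\F_p$, while the remaining moduli---those with prime factors only $2$ and $3$---are all divisible by $4$, $6$, or $9$, so they reduce to the base cases $m=4,6,9$, for which explicit non-dual-respecting code pairs can be produced directly in SageMath. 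Bootstrapping these base cases along divisibility then covers every $m\geq 4$, completing the proof for each fixed $\ell\geq 2$.
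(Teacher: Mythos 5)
The statement you are trying to prove is stated in the paper as a \emph{conjecture}; the paper offers no proof of it, only the partial analysis in the final example, and your proposal follows that same sketch. The honest comparison is therefore not ``same route vs.\ different route'' but whether your plan actually closes the gaps the paper leaves open. It does not, in three places.

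First, nondegeneracy of $\mathcal{W}^{(\ell)}$ for fixed $\ell\geq 3$. You correctly reduce this to the nonvanishing of twisted power sums $\sum_{r\in\F_p^\times}\abs{r}^\ell\psi(r)$ for even characters $\psi$, and you correctly observe that the crude bound $\abs{\widehat{v}_k}\geq t^\ell-(t-1)(t-1)^\ell$ only handles $\ell$ large relative to $p$. But the conjecture fixes $\ell$ and ranges over all $m\geq 4$, hence over all primes $p\geq 5$; for each fixed $\ell\geq 3$ this leaves \emph{infinitely} many primes untreated, and no argument is supplied for them. The paper itself records only numerical evidence here ($\ell\leq 10$, first $100$ primes), and you acknowledge that ``genuine arithmetic input is required'' --- which is exactly the missing ingredient.

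Second, the correlation hypothesis for $p\geq 11$. Your growth-rate argument separates $c_1(\wl,\wl)$ from $c_2(\wl,\wl)$ as $\ell\to\infty$ \emph{for each fixed $p$}, so the exceptional set is finite in $\ell$ for each $p$ --- but again the conjecture quantifies over all $p$ for fixed $\ell$, so ``the residual finitely many $(\ell,p)$'' is not finite: for a fixed $\ell$ (say $\ell=3$) there is no bound on which primes fall outside the asymptotic regime, and the paper's evidence is again only computational (first $1000$ primes). Note also that even for $\ell=2$, where nondegeneracy is genuinely known from \cite{MR3947347, DLW-deux}, the correlation hypothesis for $p\geq 11$ rests on the same unproved numerics, so your claim that the Euclidean case is ``immediate'' overstates what is established.

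Third, the reduction to base cases. Your scaling observation for the restriction of $\wl$ to the ideal $(m/d)\Z/m\Z$ is correct, but transporting a duality failure along $d\mid m$ requires controlling the dual codes over the larger ring, which is the substance of the argument in \cite{MR4119402} and must be re-executed for $\wl$; and the base cases $m=4,6,9$ are simply asserted to be producible in SageMath. The paper explicitly lists these as work still to be done. In short, your proposal is a faithful and well-organized expansion of the paper's own program, but the conjecture remains a conjecture: the nonvanishing of the relevant character sums for all primes at fixed $\ell\geq 3$, the correlation inequality for all $p\geq 11$, and the explicit base-case constructions are all still open.
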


\def\cprime{$'$} \def\cprime{$'$} \def\cprime{$'$} \def\cprime{$'$}
  \def\cprime{$'$}
\providecommand{\bysame}{\leavevmode\hbox to3em{\hrulefill}\thinspace}
\providecommand{\MR}{\relax\ifhmode\unskip\space\fi MR }
\providecommand{\MRhref}[2]{%
  \href{http://www.ams.org/mathscinet-getitem?mr=#1}{#2}
}
\providecommand{\href}[2]{#2}


\begin{thebibliography}{10}

\bibitem{MR4119402}
N.~Abdelghany and J.~A. Wood, \emph{Failure of the {M}ac{W}illiams identities
  for the {L}ee weight enumerator over {$\mathbb{Z}_m$}, {$m\geqslant 5$}},
  Discrete Math. \textbf{343} (2020), no.~11, 112036, 12. \MR{4119402}

\bibitem{assmus-mattson:axiomatic}
E.~F. Assmus, Jr. and H.~F. Mattson, Jr., \emph{Error-correcting codes: {A}n
  axiomatic approach}, Inform.\ and Control \textbf{6} (1963), 315--330.
  \MR{0178997 (31 \#3251)}

\bibitem{CHH}
I.~Constantinescu, W.~Heise, and T.~Honold, \emph{Monomial extensions of
  isometries between codes over {$\Z_m$}}, Proceedings of the Fifth
  International Workshop on Algebraic and Combinatorial Coding Theory (ACCT
  '96) (Sozopol, Bulgaria), Unicorn, Shumen, 1996, pp.~98--104.

\bibitem{MR384310}
P.~Delsarte, \emph{An algebraic approach to the association schemes of coding
  theory}, Philips Res.\ Rep.\ Suppl. (1973), no.~10, vi+97. \MR{384310}

\bibitem{MR1900585}
S.~T. Dougherty and M.~M. Skriganov, \emph{Mac{W}illiams duality and the
  {R}osenbloom-{T}sfasman metric}, Mosc.\ Math.\ J. \textbf{2} (2002), no.~1,
  81--97, 199. \MR{1900585}

\bibitem{MR3947347}
S.~Dyshko, \emph{The extension theorem for {L}ee and {E}uclidean weight codes
  over integer residue rings}, Des.\ Codes Cryptogr. \textbf{87} (2019), no.~6,
  1253--1269. \MR{3947347}

\bibitem{DLW-deux}
S.~Dyshko, Ph. Langevin, and J.~A. Wood, \emph{Deux analogues au d\'eterminant
  de {M}aillet}, C. R. Math.\ Acad.\ Sci.\ Paris \textbf{354} (2016), no.~7,
  649--652.

\bibitem{MR3336966}
H.~Gluesing-Luerssen, \emph{Fourier-reflexive partitions and {M}ac{W}illiams
  identities for additive codes}, Des.\ Codes Cryptogr. \textbf{75} (2015),
  no.~3, 543--563. \MR{3336966}

\bibitem{greferath-schmidt:combinatorics}
M.~Greferath and S.~E. Schmidt, \emph{Finite-ring combinatorics and
  {M}ac{W}illiams's equivalence theorem}, J. Combin.\ Theory Ser.\ A
  \textbf{92} (2000), no.~1, 17--28. \MR{1783936 (2001j:94045)}

\bibitem{Hammons-etal}
A.~R. Hammons, Jr., P.~V. Kumar, A.~R. Calderbank, N.~J.~A. Sloane, and
  P.~Sol\'e, \emph{The $\mathbb{Z}_4$-linearity of {Kerdock}, {Preparata},
  {Goethals}, and related codes}, IEEE Trans.\ Inform.\ Theory \textbf{40}
  (1994), no.~2, 301--319. \MR{1294046 (95k:94030)}

\bibitem{Heise-Honold:homogeneous-egalitarian}
W.~Heise and T.~Honold, \emph{Homogeneous and egalitarian weights on finite
  rings}, Proceedings of the Seventh International Workshop on Algebraic and
  Combinatorial Coding Theory (ACCT-2000) (Bansko, Bulgaria), 2000,
  pp.~183--188.

\bibitem{MR2939359}
F.~J. MacWilliams, \emph{Combinatorial problems of elementary abelian groups},
  ProQuest LLC, Ann Arbor, MI, 1962, Thesis (Ph.D.)--Radcliffe College.
  \MR{2939359}

\bibitem{MR0149978}
\bysame, \emph{A theorem on the distribution of weights in a systematic code},
  Bell System Tech. J. \textbf{42} (1963), 79--94. \MR{0149978 (26 \#7462)}

\bibitem{slepian:class}
D.~Slepian, \emph{A class of binary signaling alphabets}, Bell System Tech.\ J.
  \textbf{35} (1956), 203--234.

\bibitem{sagemath}
{The Sage Developers}, \emph{{S}age{M}ath, the {S}age {M}athematics {S}oftware
  {S}ystem \textup{(}{V}ersion 9.5\textup{)}}, 2022, {\tt
  https://www.sagemath.org}.

\bibitem{wood:duality}
J.~A. Wood, \emph{Duality for modules over finite rings and applications to
  coding theory}, Amer.\ J. Math. \textbf{121} (1999), no.~3, 555--575.
  \MR{1738408}

\bibitem{wood:turkey}
\bysame, \emph{Foundations of linear codes defined over finite modules: the
  extension theorem and the {M}ac{W}illiams identities}, Codes over rings
  (Ankara, 2008) (P.~Sol{\'e}, ed.), Ser.\ Coding Theory Cryptol., vol.~6,
  World Sci.\ Publ., Hackensack, NJ, 2009, pp.~124--190. \MR{2850303}

\bibitem{revuma.2807}
\bysame, \emph{Homogeneous weight enumerators over integer residue rings and
  failures of the {M}ac{W}illiams identities}, Rev.\ Un.\ Mat.\ Argentina
  \textbf{64} (2023), no.~2, 333--353.

\bibitem{MR4963840}
\bysame, \emph{Weights with maximal symmetry and failures of the
  {M}ac{W}illiams identities}, Algebraic structures and applications, Contemp.\
  Math., vol. 826, Amer.\ Math.\ Soc., [Providence], RI, [2025] \copyright
  2025, pp.~361--430. \MR{4963840}

\end{thebibliography}
\end{document}